
\let\accentvec\vec  
\documentclass{llncs}
\let\vec\accentvec 

\pagestyle{plain}

\usepackage{xspace}
\usepackage{amssymb}
\usepackage[shortlabels]{enumitem}
\setlist[enumerate,1]{label=\arabic*., ref=(\arabic*)}

\usepackage{url}
\usepackage{amsmath}

\usepackage[pdftex,                
            bookmarks=true,        
            bookmarksnumbered=true,
            bookmarksopen=false,
            hypertexnames=false,
]{hyperref}
\pdfadjustspacing=1    
\makeatletter
\providecommand*{\toclevel@title}{0}
\providecommand*{\toclevel@author}{0}
\makeatother

\usepackage{xcolor}
\definecolor{dark-red}{rgb}{0.4,0.15,0.15}
\definecolor{dark-blue}{rgb}{0.15,0.15,0.4}
\definecolor{medium-blue}{rgb}{0,0,0.5}
\definecolor{gray}{rgb}{0.5,0.5,0.5}

\hypersetup{
    colorlinks, linkcolor={dark-red},
    citecolor={dark-blue}, urlcolor={medium-blue}
}

\usepackage{doi}

\newcommand{\problemdef}[3]
{
\begin{quote}
\textsc{#1}\\
\textbf{Input:} #2\\
\textbf{Question:} #3
\end{quote}
}

\newcommand{\parproblemdef}[4]
{
\begin{quote}
\textsc{#1}\\
\textbf{Input:} #2\\
\textbf{Parameter:} #3\\
\textbf{Question:} #4
\end{quote}
}

\newcommand{\Cutwidth}[0]{\textsc{Cutwidth}\xspace}
\newcommand{\cutwidththree}[0]{\textsc{Cutwidth3}\xspace}
\newcommand{\N}{\ensuremath{\mathbb{N}}\xspace}

\newcommand{\yes}[0]{\textsc{yes}\xspace}
\newcommand{\no}[0]{\textsc{no}\xspace}

\newcommand{\OR}[0]{\ensuremath{\mathop{\mathrm{\textsc{or}}}}\xspace}
\newcommand{\AND}[0]{\ensuremath{\mathop{\mathrm{\textsc{and}}}}\xspace}

\newcommand{\qCNFSAT}[0]{\ensuremath{\mathop{\mathrm{\textsc{$q$-CNF-SAT}}}}\xspace}

\newcommand{\containment}[0]{NP~$\subseteq$ coNP$/$poly\xspace}
\newcommand{\ncontainment}[0]{NP~$\not \subseteq$ coNP$/$poly\xspace}

\newcommand{\Q}[0]{\ensuremath{\mathcal{Q}}\xspace}

\newcommand{\eqvr}[0]{\ensuremath{\mathcal{R}}\xspace}

\let\plainsquareforqed\squareforqed
\newcommand{\claimqed}{\renewcommand{\squareforqed}{$\diamondsuit$}\qed\renewcommand{\squareforqed}{\plainsquareforqed}}

\newcommand{\vc}{\ensuremath{\mathrm{\textsc{vc}}}\xspace}
\newcommand{\fvs}{\ensuremath{\mathrm{\textsc{fvs}}}\xspace}
\newcommand{\tw}{\ensuremath{\mathrm{\textsc{tw}}}\xspace}

\newcommand{\Oh}[0]{\ensuremath{\mathcal{O}}\xspace}

\newlength{\baseImageHeight}
\setlength\baseImageHeight{0.03 \textheight}

\newcommand{\hyphen}{\nobreakdash-\hspace{0pt}}

\newcommand{\Pathwidth}[0]{\textsc{Pathwidth}\xspace}

\newcommand{\kVertexCover}[0]{\textsc{$k$\hyphen Vertex Cover}\xspace}
\newcommand{\nTreewidth}[0]{\textsc{$n$\hyphen Treewidth}\xspace}
\newcommand{\nPathwidth}[0]{\textsc{$n$\hyphen Pathwidth}\xspace}
\newcommand{\TreewidthVC}[0]{\textsc{Treewidth [vc]}\xspace}
\newcommand{\PathwidthVC}[0]{\textsc{Pathwidth [vc]}\xspace}

\newcommand{\ElimCobipartite}[0]{\textsc{Cobipartite Graph Elimination}\xspace}
\newcommand{\Satisfiability}[0]{\textsc{Satisfiability}\xspace}

\newcommand{\Treewidth}[0]{\textsc{Tree\-width}\xspace}

\newtheorem{observation}{Observation}
\newtheorem{numberedclaim}{Claim}

\spnewtheorem*{qexpansion}{$q$-Expansion Lemma}{\bfseries}{\itshape}

\newcommand{\sectref}[1]{Section~\ref{#1}}

\newcommand{\defref}[1]{Definition~\ref{#1}}
\newcommand{\claimref}[1]{Claim~\ref{#1}}
\newcommand{\lemmaref}[1]{Lemma~\ref{#1}}

\newcommand{\thmref}[1]{Theorem~\ref{#1}}

\newcommand{\obsref}[1]{Observation~\ref{#1}}

\title{On Sparsification for Computing Treewidth\thanks{This work was supported by ERC Starting Grant 306992 ``Parameterized Approximation''.}}

\author{Bart M.\ P.\ Jansen \inst{1}}

\institute{
University of Bergen, Norway.
\email{bart.jansen@ii.uib.no}
}

\begin{document}

\hypersetup{bookmarksdepth=-1}

\maketitle

\hypersetup{bookmarksdepth=2}

\begin{abstract}
We investigate whether an $n$-vertex instance~$(G,k)$ of \Treewidth, asking whether the graph~$G$ has treewidth at most~$k$, can efficiently be made sparse without changing its answer. By giving a special form of \OR-cross-composition, we prove that this is unlikely: if there is an~$\epsilon > 0$ and a polynomial-time algorithm that reduces $n$-vertex \Treewidth instances to equivalent instances, of an arbitrary problem, with~$\Oh(n^{2 - \epsilon})$ bits, then \containment and the polynomial hierarchy collapses to its third level.

Our sparsification lower bound has implications for structural parameterizations of \Treewidth: parameterizations by measures that do not exceed the vertex count, cannot have kernels with~$\Oh(k^{2 - \epsilon})$ bits for any~$\epsilon > 0$, unless \containment. Motivated by the question of determining the optimal kernel size for \Treewidth parameterized by vertex cover, we improve the~$\Oh(k^3)$-vertex kernel from Bodlaender et al.~(STACS 2011) to a kernel with~$\Oh(k^2)$ vertices. Our improved kernel is based on a novel form of \emph{treewidth-invariant set}. We use the $q$-expansion lemma of Fomin et al.~(STACS 2011) to find such sets efficiently in graphs whose vertex count is superquadratic in their vertex cover number.
\end{abstract}

\section{Introduction}

The task of preprocessing inputs to computational problems to make them less dense, called \emph{sparsification}, has been studied intensively due to its theoretical and practical importance. Sparsification, and more generally, preprocessing, is a vital step in speeding up resource-demanding computations in practical settings. In the context of theoretical analysis, the \emph{sparsification lemma} due to Impagliazzo et al.~\cite{ImpagliazzoPZ01} has proven to be an important asset for studying subexponential-time algorithms. The work of Dell and van Melkebeek~\cite{DellM10} on sparsification for \Satisfiability has led to important advances in the area of kernelization lower bounds. They proved that for all~$\epsilon > 0$ and~$q \geq 3$, assuming \ncontainment, there is no polynomial-time algorithm that maps an instance of \qCNFSAT on~$n$ variables to an equivalent instance on~$\Oh(n^{q - \epsilon})$ bits --- not even if it is an instance of a \emph{different} problem. 

This paper deals with sparsification for the task of building minimum-width tree decompositions of graphs, or, in the setting of decision problems, of determining whether the treewidth of a graph~$G$ is bounded by a given integer~$k$. Preprocessing procedures for \Treewidth have been studied in applied~\cite{BodlaenderK06c,BodlaenderKE05,EijkhofBK07} and theoretical settings~\cite{BodlaenderDFH09,BodlaenderJK11b}. A team including the current author obtained~\cite{BodlaenderJK11b} a polynomial-time algorithm that takes an instance~$(G,k)$ of \Treewidth, and produces in polynomial time a graph~$G'$ such that~$\tw(G) \leq k$ if and only if~$\tw(G') \leq k$, with the guarantee that~$|V(G')| \in \Oh(\vc^3)$ ($\vc$ denotes the size of a smallest vertex cover of the input graph). A similar algorithm was given that reduces the vertex count of~$G'$ to~$\Oh(\fvs^4)$, where~$\fvs$ is the size of a smallest feedback vertex set in~$G$. Hence polynomial-time data reduction can compress \Treewidth instances to a number of 
vertices polynomial in their vertex cover (respectively feedback vertex) number. On the other hand, the natural parameterization of \Treewidth is trivially \AND-compositional, and therefore does not admit a polynomial kernel unless \containment~\cite{BodlaenderDFH09,Drucker12}. These results give an indication of how far the vertex count of a \Treewidth instance can efficiently be reduced in terms of various measures of its complexity. However, they do not tell us anything about the question of \emph{sparsification}: can we efficiently make a \Treewidth instance less dense, without changing its answer?

\textbf{Our results.} 
Our first goal in this paper is to determine whether nontrivial sparsification is possible for \Treewidth instances. As a simple graph~$G$ on~$n$ vertices can be encoded in~$n^2$ bits through its adjacency matrix, \Treewidth instances consisting of a graph~$G$ and integer~$k$ in the range~$[1 \ldots n]$ can be encoded in~$\Oh(n^2)$ bits. We prove that it is unlikely that this trivial sparsification scheme for \Treewidth can be improved significantly: if there is a polynomial-time algorithm that reduces \Treewidth instances on~$n$ vertices to equivalent instances of an arbitrary problem, with~$\Oh(n^{2 - \epsilon})$ bits, for some~$\epsilon > 0$, then \containment and the polynomial hierarchy collapses~\cite{Yap83}. We prove this result by giving a particularly efficient form of \OR-cross-composition~\cite{BodlaenderJK12f}. We embed the \OR of~$t$ $n$-vertex instances of an NP-complete graph problem into a \Treewidth instance with~$\Oh(n \sqrt{t})$ vertices. The construction is a combination of three 
ingredients. 
We carefully inspect the properties of Arnborg et al.'s~\cite{ArnborgCP87} NP-completeness proof for \Treewidth to obtain an NP-complete source problem called \ElimCobipartite that is amenable to composition. Its instances have a restricted form that ensures that good solutions to the composed \Treewidth instance cannot be obtained by combining partial solutions to two different inputs. Then, like Dell and Marx~\cite{DellM12}, we use the layout of a~$2 \times \sqrt{t}$ table to embed~$t$ instances into a graph on~$\Oh(n^{\Oh(1)}\sqrt{t})$ vertices. For each way of choosing a cell in the top and bottom row, we embed one instance into the edge set induced by the vertices representing the two cells. Finally, we use ideas employed by Bodlaender et al.~\cite{BodlaenderJK12a} in the superpolynomial lower bound for \Treewidth parameterized by the vertex-deletion distance to a clique: we compose the input instances of \ElimCobipartite into a cobipartite graph to let the resulting \Treewidth instance express a 
logical \OR, rather than an \AND. Our proof combines these three ingredients with an intricate analysis of the behavior of elimination orders on the constructed instance. As the treewidth of the constructed cobipartite graph equals its pathwidth~\cite{Mohring96}, the obtained sparsification lower bound for \Treewidth also applies to \Pathwidth.

Our sparsification lower bound has immediate consequences for parameterizations of \Treewidth by graph parameters that do not exceed the vertex count, such as the vertex cover number or the feedback vertex number. Our result shows the impossibility of obtaining kernels of bitsize~$\Oh(k^{2 - \epsilon})$ for such parameterized problems, assuming \ncontainment. The kernel for \Treewidth parameterized by vertex cover (\TreewidthVC) obtained by Bodlaender et al.~\cite{BodlaenderJK11} contains~$\Oh(\vc^3)$ vertices, and therefore has bitsize~$\Omega(\vc^4)$. Motivated by the impossibility of obtaining kernels with~$\Oh(\vc^{2 - \epsilon})$ bits, and with the aim of developing new reduction rules that are useful in practice, we further investigate kernelization for \TreewidthVC. We give an improved kernel based on \emph{treewidth-invariant sets}: independent sets of vertices whose elimination from the graph has a predictable effect on its treewidth. While finding such sets seems to be hard in general, we show that 
the $q$-expansion lemma, previously employed by Thomass\'e~\cite{Thomasse10} and Fomin et al.~\cite{FominLMPS11}, can be used to find them when the graph is large with respect to its vertex cover number. The resulting kernel shrinks \Treewidth instances to~$\Oh(\vc^2)$ vertices, allowing them to be encoded in~$\Oh(\vc^3)$ bits. Thus we reduce the gap between the upper and lower bounds on kernel sizes for \TreewidthVC. Our new reduction rule for \TreewidthVC relates to the old rules like the crown-rule for \kVertexCover relates to the high-degree Buss-rule~\cite{BussG93}: by exploiting local optimality considerations, our reduction rule does not need to know the value of~$k$.

\textbf{Related work.} While there is an abundance of superpolynomial kernel lower bounds, few superlinear lower bounds are known for problems admitting polynomial kernels. There are results for hitting set problems~\cite{DellM10}, packing problems~\cite{DellM12,HermelinW12}, and for domination problems on degenerate graphs~\cite{CyganGH13}.

\section{Preliminaries} \label{section:preliminaries}
\textbf{Parameterized complexity and kernels.}
A parameterized problem~$\Q$ is a subset of~$\Sigma^* \times \mathbb{N}$. The second component of a tuple~$(x,k) \in \Sigma^* \times \mathbb{N}$ is called the \emph{parameter}~\cite{DowneyF99,FlumG06}. The set~$\{1, 2, \ldots, n\}$ is abbreviated as~$[n]$. For a finite set~$X$ and integer~$i$ we use~$\binom{X}{i}$ to denote the collection of size-$i$ subsets of~$X$.

\begin{definition}[Generalized kernelization]\label{definition:kernelization}
Let~$\Q,\Q'\subseteq\Sigma^*\times\N$ be parameterized problems and let~$h\colon\N\to\N$ be a computable function. A \emph{generalized kernelization for~$\Q$ into~$\Q'$ of size~$h(k)$} is an algorithm that, on input~$(x,k)\in\Sigma^*\times\N$, takes time polynomial in~$|x|+k$ and outputs an instance~$(x',k')$ such that:
\begin{quote}
\begin{itemize}
	\item $|x'|$ and~$k'$ are bounded by~$h(k)$.
	\item $(x',k') \in \Q'$ if and only if~$(x,k) \in \Q$.
\end{itemize}
\end{quote}
The algorithm is a \emph{kernelization}, or in short a \emph{kernel}, for~$\Q$ if~$\Q'=\Q$. It is a \emph{polynomial (generalized) kernelization} if~$h(k)$ is a polynomial.
\end{definition}

\textbf{Cross-composition.} To prove our sparsification lower bound, we use a variant of cross-composition tailored towards lower bounds on the degree of the polynomial in a kernel size bound. The extension is discussed in the journal version~\cite{BodlaenderJK12f} of the extended abstract on cross-composition~\cite{BodlaenderJK11}.

\begin{definition}[Polynomial equivalence relation] \label{polyEquivalenceRelation}
An equivalence relation~\eqvr on~$\Sigma^*$ is called a \emph{polynomial equivalence relation} if the following conditions hold:
\begin{enumerate}
	\item There is an algorithm that given two strings~$x,y \in \Sigma^*$ decides whether~$x$ and~$y$ belong to the same equivalence class in time polynomial in~$|x| + |y|$.
	\item For any finite set~$S \subseteq \Sigma^*$ the equivalence relation~$\eqvr$ partitions the elements of~$S$ into a number of classes that is polynomially bounded in the size of the largest element of~$S$.
\end{enumerate}
\end{definition}  

\begin{definition}[Cross-composition]\label{definition:crosscomposition}
Let~$L\subseteq\Sigma^*$ be a language, let~$\eqvr$ be a polynomial equivalence relation on~$\Sigma^*$, let~$\Q\subseteq\Sigma^*\times\N$ be a parameterized problem, and let~$f \colon \N \to \N$ be a function. An \emph{\OR-cross-com\-position of~$L$ into~$\Q$} (with respect to \eqvr) \emph{of cost~$f(t)$} is an algorithm that, given~$t$ instances~$x_1, x_2, \ldots, x_t \in \Sigma^*$ of~$L$ belonging to the same equivalence class of~$\eqvr$, takes time polynomial in~$\sum _{i=1}^t |x_i|$ and outputs an instance~$(y,k) \in \Sigma^* \times \mathbb{N}$ such that:
\begin{quote}
\begin{itemize}
\item The parameter~$k$ is bounded by $\Oh(f(t)\cdot(\max_i|x_i|)^c)$, where~$c$ is some constant independent of~$t$.
\item $(y,k) \in \Q$ if and only if there is an~$i \in [t]$ such that~$x_i \in L$.\label{property:OR}
\end{itemize}
\end{quote}
\end{definition}

\begin{theorem}[{\cite[Theorem 6]{BodlaenderJK12f}}] \label{theorem:orcrosscomposition:polynomialbound}
Let~$L\subseteq\Sigma^*$ be a language, let~$\Q\subseteq\Sigma^*\times\N$ be a parameterized problem, and let~$d,\epsilon$ be positive reals. If~$L$ is NP-hard under Karp reductions, has an \OR-cross-composition into~$\Q$ with cost~$f(t)=t^{1/d+o(1)}$, where~$t$ denotes the number of instances, and~$\Q$ has a polynomial (generalized) kernelization with size bound~$\Oh(k^{d-\epsilon})$, then \containment.
\end{theorem}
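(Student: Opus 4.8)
The plan is to chain the \OR-cross-composition with the hypothesised kernelization to produce an unexpectedly efficient \OR-distillation of an NP-complete problem, and then to invoke the complementary-witness machinery of Dell and van Melkebeek together with Yap's theorem.

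Assume toward a contradiction that \Q\ admits a polynomial generalized kernelization into some problem~$\Q'$ with size bound~$h(k)=\Oh(k^{d-\epsilon})$. Because $L$ is NP-hard under Karp reductions, fix a polynomial-time Karp reduction $g$ from \Satisfiability to $L$, so that $|g(z)|\le|z|^{\Oh(1)}$; it therefore suffices to \OR-distill \Satisfiability, a problem in \NP. Given $t$ instances $x_1,\dots,x_t$ of \Satisfiability, each of bitsize at most $s$, I would proceed as follows. (i) Apply $g$ to obtain $t$ instances of $L$ of bitsize $s'\le s^{\Oh(1)}$. (ii) Using the polynomial equivalence relation $\eqvr$ supplied with the cross-composition, partition them in polynomial time into at most $p(s')$ classes for some polynomial $p$. (iii) Run the \OR-cross-composition separately inside each class, obtaining at most $p(s')$ instances of \Q, each with parameter $k=\Oh(f(t)\cdot (s')^c)=\Oh(t^{1/d+o(1)}\cdot\mathrm{poly}(s))$. (iv) Kernelize each of these into an instance of $\Q'$ of bitsize $\Oh(h(k))=\Oh\!\big((t^{1/d+o(1)}\mathrm{poly}(s))^{d-\epsilon}\big)=\Oh(t^{\,1-\epsilon/d+o(1)}\cdot\mathrm{poly}(s))$. (v) Bundle these into a single instance of the list language $\Q'':=\{\langle z_1,\dots,z_m\rangle : \exists j\ z_j\in\Q'\}$. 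Unrolling the defining guarantees of steps (ii)--(v), this final string lies in $\Q''$ if and only if at least one $x_i$ is satisfiable, and its bitsize is at most $p(s')$ times that of one kernelized instance, i.e.\ $\Oh(t^{\,1-\delta+o(1)}\cdot\mathrm{poly}(s))$ where $\delta:=\epsilon/d>0$.

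It remains to fix the number of instances. Since $\delta$ and the degrees of all the polynomial factors above are fixed constants, choosing $t:=s^{K}$ for a sufficiently large constant $K=K(\delta)$ makes the output bitsize $o(t)$ while keeping $t$ polynomially bounded in $s$ and the whole pipeline polynomial-time. Hence \Satisfiability admits an \OR-distillation, into the arbitrary language $\Q''$, with output length sublinear in the number of instances; by the complementary witness lemma of Dell and van Melkebeek~\cite{DellM10} (see also~\cite{Drucker12,BodlaenderDFH09}) this yields $\coNP\subseteq\NPpoly$, equivalently \containment, and then the polynomial hierarchy collapses to its third level by Yap~\cite{Yap83}.

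The step that will need the most care is the accounting that keeps the output genuinely sublinear in $t$: one must check that the factor $p(s')$ from the number of equivalence classes, and every stray $\mathrm{poly}(s)$ coming from $g$ and from $(s')^c$, are absorbed once $t$ is a sufficiently large fixed power of $s$; and, since the cost exponent $1/d+o(1)$ has its $o(1)$ measured as $t\to\infty$, one must verify that after the substitution $t=s^K$ (with $K$ fixed) this $o(1)$ collapses to a harmless $s^{o(1)}$ factor. The genuinely hard content --- that sublinear \OR-distillation of an NP-complete language collapses the polynomial hierarchy --- is used purely as a black box; reproving it would require the information-theoretic arguments of~\cite{DellM10,Drucker12} and is not attempted here.
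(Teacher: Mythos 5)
Your proposal is correct and follows the same route as the proof of the cited result~\cite[Theorem 6]{BodlaenderJK12f}, which this paper invokes as a black box without reproving: partition the inputs by the polynomial equivalence relation, compose within each class, kernelize each composed instance, bundle the outputs into an \OR-list language, and feed the resulting weak \OR-compression of an NP-hard problem (with~$t$ set to a sufficiently large fixed power of the instance size~$s$) into Dell and van Melkebeek's complementary witness lemma, followed by Yap's theorem. The exponent accounting you flag as delicate --- absorbing the~$p(s')$ equivalence classes, the~$(s')^c$ factor, and the~$o(1)$ in the cost exponent by choosing~$K$ larger than~$d/\epsilon$ times the sum of the relevant polynomial degrees --- is exactly the computation the reference carries out, and you do it correctly.
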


\textbf{Graphs.}
All graphs we consider are finite, simple, and undirected. An undirected graph~$G$ consists of a vertex set~$V(G)$ and an edge set~$E(G) \subseteq \binom{V(G)}{2}$. The open neighborhood of a vertex~$v$ in graph~$G$ is denoted~$N_G(v)$, while its closed neighborhood is~$N_G[v]$. The open neighborhood of a set~$S \subseteq V(G)$ is~$N_G(S) := \bigcup _{v \in S} N_G(v) \setminus S$, while the closed neighborhood is~$N_G[S] := N_G(S) \cup S$. If~$S \subseteq V(G)$ then~$G[S]$ denotes the subgraph of~$G$ induced by~$S$. We use~$G - S$ to denote the graph~$G[V(G) \setminus S]$ that results after deleting all vertices of~$S$ and their incident edges from~$G$. A graph is \emph{cobipartite} if its edge-complement is bipartite. Equivalently, a graph~$G$ is cobipartite if its vertex set can be partitioned into two sets~$X$ and~$Y$, such that both~$G[X]$ and~$G[Y]$ are cliques. A matching~$M$ in a graph~$G$ is a set of edges whose endpoints are all distinct. The endpoints of the edges in~$M$ are \emph{saturated} by the 
matching. For disjoint subsets~$A$ and~$B$ of a graph~$G$, we say that~$A$ has a perfect matching into~$B$ if there is a matching that saturates~$A \cup B$ such that each edge in the matching has exactly one endpoint in each set. If~$\{u,v\}$ is an edge in graph~$G$, then \emph{contracting~$\{u,v\}$ into~$u$} is the operation of adding edges between~$u$ and~$N_G(v)$ while removing~$v$. A graph~$H$ is a \emph{minor} of a graph~$G$, if~$H$ can be obtained from a subgraph of~$G$ by edge contractions.

\textbf{Treewidth and Elimination Orders.}
While treewidth~\cite{Bodlaender98} is commonly defined in terms of tree decompositions, for our purposes it is more convenient to work with an alternative characterization in terms of \emph{elimination orders}. \emph{Eliminating} a vertex~$v$ in a graph~$G$ is the operation of removing~$v$ while completing its open neighborhood into a clique, i.e., adding all missing edges between neighbors of~$v$. An elimination order of an $n$-vertex graph~$G$ is a permutation~$\pi \colon V(G) \to [n]$ of its vertices. Given an elimination order~$\pi$ of~$G$, we obtain a series of graphs by consecutively eliminating~$\pi^{-1}(1), \ldots, \pi^{-1}(n)$ from~$G$. The \emph{cost} of eliminating a vertex~$v$ according to the order~$\pi$, is the size of the \emph{closed neighborhood} of~$v$ at the moment it is eliminated. The \emph{cost of~$\pi$ on~$G$}, denoted~$c_G(\pi)$, is defined as the maximum cost over all vertices of~$G$.

\begin{theorem}[{\cite[Theorem 36]{Bodlaender98}}] \label{theorem:tw:eliminationcharacterization}
The treewidth of a graph~$G$ is exactly one less than the minimum cost of an elimination order for~$G$.
\end{theorem}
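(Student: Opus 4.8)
The plan is to prove the two inequalities $\tw(G) \le \mathrm{ec}(G) - 1$ and $\mathrm{ec}(G) \le \tw(G) + 1$ separately, where $\mathrm{ec}(G)$ denotes the minimum cost of an elimination order of $G$, in both cases by induction on $|V(G)|$ that converts a tree decomposition into an elimination order (or vice versa) one vertex at a time. The base case $|V(G)| \le 1$ is trivial, since then $\tw(G) = 0$ while every elimination order has cost exactly $1$. I would also recall at the outset the one external fact I need: in any tree decomposition, the nodes whose bags contain a fixed vertex induce a subtree, and consequently (by the Helly property of subtrees of a tree) every clique of the graph is contained in a single bag.

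For $\mathrm{ec}(G) \le \tw(G) + 1$, I would start from an optimal tree decomposition $(T, \{X_t\})$ of width $k = \tw(G)$ and first clean it up: as long as $T$ has a leaf $\ell$ whose bag is contained in the bag of its unique neighbour, delete $\ell$; this keeps the decomposition valid and of width $\le k$ and strictly shrinks $T$, so it terminates, and if $T$ collapses to one node then $G$ is a clique on at most $k+1$ vertices and any elimination order has cost $\le k+1$. Otherwise pick a leaf $\ell$ with neighbour $p$ and a vertex $v \in X_\ell \setminus X_p$; since the bags containing $v$ form a subtree avoiding $p$, that subtree is just $\{\ell\}$, so $N_G(v) \subseteq X_\ell \setminus \{v\}$ and eliminating $v$ first costs $|N_G[v]| \le |X_\ell| \le k+1$. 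Letting $G'$ be the result of eliminating $v$, deleting $v$ from every bag gives a tree decomposition of $G'$ of width $\le k$ --- the fill-in edges created are all inside $N_G(v) \subseteq X_\ell \setminus \{v\}$ --- so the induction hypothesis supplies an elimination order of $G'$ of cost $\le k+1$, and prepending $v$ yields one for $G$.

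For $\tw(G) \le \mathrm{ec}(G) - 1$, I would take an elimination order $\pi$ of minimum cost $c = \mathrm{ec}(G)$, let $v = \pi^{-1}(1)$, and let $G'$ be obtained by eliminating $v$. Then $|N_G[v]| \le c$, every edge of $G$ not incident to $v$ survives in $G'$, the set $N_G(v)$ is a clique of $G'$, and $\pi$ restricted to $V(G')$ is an elimination order of $G'$ of cost $\le c$. By induction $G'$ has a tree decomposition of width $\le c-1$; by the Helly fact some bag $X_b$ contains the clique $N_G(v)$, so I attach a new node with bag $N_G[v]$ adjacent to $b$. Checking the tree-decomposition axioms is routine: the edges at $v$ all lie in the new bag, and every $u \in N_G(v)$ still occupies a connected set of nodes because the new node neighbours $X_b \ni u$; the new bag has size $\le c$, so the width stays $\le c - 1$, completing the induction.

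The genuinely load-bearing step is the clique-containment (Helly) property invoked in the second direction --- without it there is no reason a single bag can absorb $N_G(v)$ when we reintroduce $v$. Everything else is careful bookkeeping, though I would be attentive to the degenerate situations: a disconnected $G$, a vertex with empty neighbourhood (cost $1$, no fill-in), and the termination of the leaf-cleanup step in the first direction. An alternative route would go through chordal completions --- $\tw(G)$ is one less than the minimum clique number over chordal supergraphs of $G$, and running an elimination order produces exactly such a supergraph whose clique number equals the cost --- but the direct one-vertex induction above seems shortest.
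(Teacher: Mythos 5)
The paper does not prove this statement at all: it is imported verbatim from Bodlaender's survey (Theorem~36 of the cited reference) and used as a black box, so there is no internal proof to compare yours against. Judged on its own, your argument is the standard one and is correct: the two one-vertex inductions, the leaf-cleanup to expose a vertex whose whole neighbourhood sits in a single (leaf) bag, and the Helly/clique-containment property to reattach an eliminated vertex are exactly the textbook route, and you correctly identify the Helly step as the load-bearing one. Two trivial nits. First, when the cleanup collapses $T$ to a single node you conclude that ``$G$ is a clique on at most $k+1$ vertices''; that need not be a clique --- all you get, and all you need, is $|V(G)|\leq k+1$, whence every elimination order has cost at most $k+1$. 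Second, the base case $|V(G)|=0$ is degenerate (there is no vertex to realise a cost), so it is cleaner to take $|V(G)|=1$ as the base or to exclude the empty graph by convention. Neither affects the validity of the proof. Your closing remark about the chordal-completion detour is also accurate; the direct induction you give is indeed the shorter of the two standard proofs.
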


\begin{lemma}[{\cite[Lemma 4]{BodlaenderFKKT06}, cf.~\cite[Lemma 6.13]{Jansen13}}] \label{lemma:eliminateOnePartiteSet}
Let~$G$ be a graph containing a clique~$B \subseteq V(G)$, and let~$A := V(G) \setminus B$. There is a minimum-cost elimination order~$\pi^*$ of~$G$ that eliminates all vertices of~$A$ before eliminating any vertex of~$B$.
\end{lemma}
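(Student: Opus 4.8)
The plan is to transfer the problem to a triangulation of $G$. Recall the well-known characterization of treewidth by chordal completions (see e.g.\ \cite{Bodlaender98}, or derive it from \thmref{theorem:tw:eliminationcharacterization} by taking a minimum-cost elimination order $\pi$ of $G$ and adding to $G$ all fill edges created while running $\pi$, which turns $\pi$ into a perfect elimination order of the resulting chordal graph): there is a chordal graph $H$ with $V(H) = V(G)$ and $E(G) \subseteq E(H)$ such that $\omega(H) \leq \tw(G) + 1$. Since $E(G) \subseteq E(H)$, the clique $B$ of $G$ is still a clique in $H$. I would then prove the statement for $H$ first, using perfect elimination orders, and transfer the result back to $G$.

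The heart of the argument is the following claim, proved by induction on $|V(H)|$: a chordal graph $H$ with a clique $B \subseteq V(H)$ has a perfect elimination order that eliminates all of $V(H) \setminus B$ before any vertex of $B$. If $V(H) = B$ this is immediate. Otherwise $H$ has a simplicial vertex $a \in V(H) \setminus B$: if $H$ is complete, any vertex outside $B$ works, and if $H$ is not complete, then by the standard fact that a non-complete chordal graph has two non-adjacent simplicial vertices, at least one of these two lies outside the clique $B$. Eliminating such an $a$ costs $|N_H[a]| \leq \omega(H)$ because $N_H[a]$ is a clique, leaves the chordal graph $H - a$ in which $B$ is still a clique and $\omega(H - a) \leq \omega(H)$, and by the induction hypothesis $H - a$ has a perfect elimination order eliminating $(V(H) \setminus B) \setminus \{a\}$ before $B$; prepending $a$ gives a perfect elimination order $\sigma$ of $H$ with $c_H(\sigma) \leq \omega(H)$ that eliminates all of $V(H) \setminus B$ first.

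It remains to transfer $\sigma$ to $G$. Since $E(G) \subseteq E(H)$, a routine induction on the number of already-eliminated vertices shows that after eliminating any prefix of $\sigma$, the graph obtained from $G$ is a subgraph of the graph obtained from $H$ (elimination respects the subgraph relation, since a smaller open neighbourhood gets completed into a smaller clique). Hence the cost of eliminating each vertex according to $\sigma$ is no larger in $G$ than in $H$, so $c_G(\sigma) \leq c_H(\sigma) \leq \omega(H) \leq \tw(G) + 1$. On the other hand, \thmref{theorem:tw:eliminationcharacterization} says every elimination order of $G$ has cost at least $\tw(G) + 1$, so $c_G(\sigma) = \tw(G) + 1$ and $\pi^* := \sigma$ is a minimum-cost elimination order of $G$ that eliminates all of $A = V(G) \setminus B$ before any vertex of $B$.

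The step I expect to be the real obstacle is the inductive claim for chordal graphs — specifically, guaranteeing that at every stage there is a simplicial vertex to eliminate that lies outside $B$, which is exactly where the clique structure of $B$ (forcing two non-adjacent simplicial vertices to supply one outside $B$) is used. One might instead hope for a direct exchange argument on a minimum-cost elimination order of $G$, repeatedly moving vertices of $B$ later; but a naive adjacent transposition of a $B$-vertex past an $A$-neighbour that has private neighbours can strictly increase the elimination cost, so such an argument would need substantially more bookkeeping, and the detour through a triangulation avoids this entirely.
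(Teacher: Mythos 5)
Your proof is correct and complete. Note that the paper itself does not prove this lemma at all --- it is quoted from the literature (\cite[Lemma 4]{BodlaenderFKKT06}, \cite[Lemma 6.13]{Jansen13}) --- so there is no in-paper argument to match; what matters is whether your self-contained derivation holds up, and it does. The three ingredients all check out: (i) the existence of a triangulation $H \supseteq G$ with $\omega(H) \le \tw(G)+1$ is standard (and your sketch of obtaining it as the fill-in of an optimal elimination order is the right way to get it from \thmref{theorem:tw:eliminationcharacterization}); (ii) the inductive claim for chordal graphs is sound, and you correctly identified its crux --- Dirac's theorem gives two non-adjacent simplicial vertices in a non-complete chordal graph, and two non-adjacent vertices cannot both lie in the clique $B$, so a simplicial vertex outside $B$ always exists; eliminating it creates no fill, so the residual graph is exactly $H-a$, keeping the induction clean; (iii) the monotonicity of elimination under the subgraph relation ($N_G(v)\subseteq N_H(v)$ implies the completed graphs remain nested) correctly transfers the cost bound back to $G$, and the matching lower bound from \thmref{theorem:tw:eliminationcharacterization} certifies optimality. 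Your closing remark is also well taken: the naive exchange argument (swapping a $B$-vertex past an $A$-vertex in an optimal order) can indeed increase the cost of an intermediate elimination, which is why published proofs of this fact typically route through triangulations or tree decompositions (e.g.\ rooting a minimum-width tree decomposition at a bag containing $B$) rather than through local swaps. The only cosmetic point: you could state explicitly that $c_H(\sigma)\le\omega(H)$ follows because every vertex is simplicial at the moment $\sigma$ eliminates it, so each elimination cost is the size of a clique of $H$ (the induced subgraph structure never changes along a perfect elimination order); this is implicit in your induction but worth a sentence.
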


\noindent Following the notation employed by Arnborg et al.~\cite{ArnborgCP87} in their NP-completeness proof, we say that a \emph{block} in a graph~$G$ is a maximal set of vertices with the same closed neighborhood. An elimination order~$\pi$ for~$G$ is \emph{block-contiguous} if for each block~$S \subseteq V(G)$, it eliminates the vertices of~$S$ contiguously. The following observation implies that every graph has a block-contiguous minimum-cost elimination order.

\begin{observation} \label{observation:finishBlock}
Let~$G$ be a graph containing two adjacent vertices~$u,v$ such that~$N_G[u] \subseteq N_G[v]$. Let~$\pi$ be an elimination order of~$G$ that eliminates~$v$ before~$u$, and let the order~$\pi'$ be obtained by updating~$\pi$ such that it eliminates~$u$ just before~$v$. Then the cost of~$\pi'$ is not higher than the cost of~$\pi$.
\end{observation}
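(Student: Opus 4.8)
The plan is to stay within the elimination-order picture and lean on the standard fact that the graph $\mathrm{elim}(G,S)$ obtained by eliminating a vertex set $S$ is independent of the order used within $S$, and that two vertices outside $S$ are adjacent in $\mathrm{elim}(G,S)$ iff $G$ has a path between them with all internal vertices in $S$; consequently the cost of eliminating a vertex $x$ in an order $\sigma$ equals $|N_H[x]|$ for $H=\mathrm{elim}(G,B_\sigma(x))$, where $B_\sigma(x)$ is the set of vertices eliminated before $x$ in $\sigma$. First I would fix notation by listing the vertices in $\pi$-order as $P$ (those before $v$), then $v$, then the vertices $R=r_1,\dots,r_m$ strictly between $v$ and $u$, then $u$, then $Q$ (those after $u$); the order $\pi'$ eliminates $P$, then $u$, then $v$, then $R$, then $Q$, keeping all relative orders. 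Comparing $B_\pi(\cdot)$ with $B_{\pi'}(\cdot)$ immediately shows that every vertex of $P\cup Q$ has the same cost under $\pi$ and $\pi'$, so it suffices to bound the $\pi'$-cost of $u$, of $v$, and of each $r_j$ by $c_G(\pi)$.

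The key point is that domination by $v$ makes $u$ \emph{simplicial} (its open neighbourhood a clique) as soon as $v$ is eliminated, and $u$ stays simplicial until it too is eliminated. I would establish this in three small steps. (a)~Eliminating any vertex $w\notin\{u,v\}$ preserves both ``$u,v$ adjacent'' and ``$N[u]\subseteq N[v]$'': the only interesting case is $w\in N(u)$, and then $w\in N[u]\subseteq N[v]$, so $u$ and $v$ receive the same new neighbours. Running this through the elimination of $P$ gives $N_{G_1}[u]\subseteq N_{G_1}[v]$ with $u,v$ adjacent in $G_1:=\mathrm{elim}(G,P)$. (b)~Eliminating $v$ from $G_1$ completes $N_{G_1}(v)$ into a clique, and since $N_{G_1}(u)\setminus\{v\}\subseteq N_{G_1}(v)$ the closed neighbourhood of $u$ in $\mathrm{elim}(G_1,\{v\})$ is exactly $N_{G_1}(v)$ --- so $u$ is now simplicial. (c)~Eliminating any vertex $w\neq u$ from a graph in which $u$ is simplicial keeps $u$ simplicial: if $w\in N(u)$ then $N(u)\setminus\{w\}\subseteq N(w)$, so after eliminating $w$ the new neighbourhood of $u$ lands inside the freshly completed clique on $N(w)$. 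The same computation shows that eliminating a simplicial vertex never enlarges anybody else's closed neighbourhood.

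Granting this structure, the three cost bounds follow. The $\pi'$-cost of $u$ is $|N_{G_1}[u]|\le|N_{G_1}[v]|$, which equals the $\pi$-cost of $v$; the $\pi'$-cost of $v$ is $|N_{G_1}[v]|-1$, since eliminating $u$ deletes $u$ from $N_{G_1}[v]$ and, by domination, adds no new neighbour of $v$; and for each $r_j$, the graph that $\pi$ reaches just before eliminating $r_j$ is some $H$ in which $u$ is simplicial (by (b) and (c) applied to $r_1,\dots,r_{j-1}$), while the graph $\pi'$ reaches just before eliminating $r_j$ is $\mathrm{elim}(H,\{u\})$, so eliminating the simplicial vertex $u$ cannot increase the closed neighbourhood of $r_j$ and its $\pi'$-cost is at most its $\pi$-cost. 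Hence every step of $\pi'$ costs at most $c_G(\pi)$. I expect the genuine obstacle to be the vertices of $R$: moving $u$ far to the left means they are eliminated with $u$ already gone, which a priori could inflate their cost, and the simpliciality invariant is precisely what forbids that --- so the real content is in proving that invariant cleanly, together with the careful (but routine) use of order-independence needed to identify ``the graph $\pi'$ sees before $r_j$'' with $\mathrm{elim}(H,\{u\})$.
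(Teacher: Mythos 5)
Your proof is correct and complete. The paper states this as an unproved Observation, so there is no official argument to compare against; your write-up supplies exactly the reasoning the paper implicitly relies on. All the individual steps check out: domination ($N[u]\subseteq N[v]$, with $u,v$ adjacent) is preserved when any other vertex is eliminated, so $u$'s cost in $\pi'$ is at most $v$'s cost in $\pi$ and $v$'s cost in $\pi'$ drops by one; after $v$ is gone, $u$ is simplicial and remains so, and eliminating a simplicial vertex never enlarges any closed neighbourhood, which is precisely what controls the vertices eliminated between $v$ and $u$ in the original order; the order-independence of $\mathrm{elim}(G,S)$ handles the vertices after $u$. You also correctly identify the one genuinely delicate point (the intermediate vertices $R$) and resolve it with the simpliciality invariant.
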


\section{Sparsification Lower Bound for Treewidth} \label{section:lowerbound}
In this section we give the sparsification lower bound for \Treewidth. We phrase it in terms of a kernelization lower bound for the parameterization by the number of vertices, formally defined as follows.

\parproblemdef{\nTreewidth}
{An integer~$n$, an $n$-vertex graph~$G$, and an integer~$k$.}
{The number of vertices~$n$.}
{Is the treewidth of~$G$ at most~$k$?}

\noindent The remainder of this section is devoted to the proof of the following theorem.

\begin{theorem} \label{theorem:treewidth:sparsification}
If \nTreewidth admits a (generalized) kernel of size $\Oh(n^{2 - \epsilon})$, for some~$\epsilon > 0$, then \containment.
\end{theorem}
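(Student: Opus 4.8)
The plan is to apply \thmref{theorem:orcrosscomposition:polynomialbound} with $d = 2$: I will exhibit an NP-hard source language $L$ together with an \OR-cross-composition of $L$ into \nTreewidth of cost $f(t) = \sqrt{t} = t^{1/2 + o(1)}$. Since the parameter of \nTreewidth is its vertex count, it suffices to transform $t$ instances lying in one equivalence class into a single graph $G$ on $\Oh\bigl(\sqrt{t}\cdot(\max_i |x_i|)^{\Oh(1)}\bigr)$ vertices together with an integer $k$, so that $\tw(G) \le k$ if and only if at least one input is a \yes-instance. A generalized kernel of size $\Oh(n^{2 - \epsilon})$ for \nTreewidth would then, through \thmref{theorem:orcrosscomposition:polynomialbound}, imply \containment.

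The first ingredient is the source problem. Inspecting the NP-completeness proof of \Treewidth by Arnborg et al.~\cite{ArnborgCP87}, I would distill a restricted problem \ElimCobipartite whose instances are cobipartite graphs with a prescribed partition into two cliques $P \dotcup Q$ together with a target $k_0$, asking whether some elimination order has cost at most $k_0$ (for cobipartite graphs \lemmaref{lemma:eliminateOnePartiteSet} lets us assume such an order eliminates all of $P$ before any of $Q$). The point of the restriction — normalization conditions on the sizes of $P$ and $Q$, on the block structure, and a guarantee that the optimum cost of each instance lies in a narrow, predictable window — is robustness under composition: it will prevent a cheap elimination order of the composed graph from being stitched together out of partial orders belonging to two different inputs. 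I would then verify that \ElimCobipartite remains NP-complete.

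The second and third ingredients build $G$. Using a polynomial equivalence relation I may assume every input is well-formed, that all $t$ of them share the same clique sizes $|P| = p$ and $|Q| = q$ and the same target $k_0$, and that $t = s^2$ is a perfect square (padding with trivial \no-instances); a malformed or mismatched family is mapped to a fixed \no-instance. Following the table layout of Dell and Marx~\cite{DellM12}, I lay out a $2 \times s$ grid: the top row contributes vertex sets $X_1, \dots, X_s$, each a copy of the $P$-vertices, and $\bigcup_a X_a$ is turned into one clique; the bottom row contributes $Y_1, \dots, Y_s$, each a copy of the $Q$-vertices, and $\bigcup_b Y_b$ is turned into one clique. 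For the pair $(a,b) \in [s]^2$ I identify the $P$- and $Q$-sides of the $((a-1)s + b)$-th input with $X_a$ and $Y_b$ and insert that input's cross edges between $X_a$ and $Y_b$. Then $G$ is cobipartite with sides $\bigcup_a X_a$ and $\bigcup_b Y_b$ (this is the cobipartite trick of Bodlaender et al.~\cite{BodlaenderJK12a} for turning the \AND-like behaviour of \Treewidth into an \OR), it has $\Oh(s(p+q))$ vertices, and I set $k$ to an appropriate threshold determined by $p$, $q$, $s$, and $k_0$.

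The correctness proof is the heart of the argument. For the forward direction, given a \yes-instance indexed by $(a^\ast, b^\ast)$ I take an optimal elimination order for it and lift it to an order of $G$ that first removes the inactive top cells $X_a$ ($a \ne a^\ast$), then $X_{a^\ast}$ in the good order, then the inactive bottom cells $Y_b$ ($b \ne b^\ast$), and finally $Y_{b^\ast}$; I verify step by step that the cost never exceeds $k$, where the normalization of \ElimCobipartite is exactly what keeps the contribution of the inactive cells under control. For the reverse direction I take a minimum-cost elimination order of $G$, normalize it by applying \lemmaref{lemma:eliminateOnePartiteSet} to a clique side and \obsref{observation:finishBlock} to make it block-contiguous, and then argue that the budget $k$ forces the order to \emph{commit}: a single top cell must be eliminated last among $\bigcup_a X_a$ and a single bottom cell last among $\bigcup_b Y_b$, whence the induced suborder on that pair is a valid cost-$k_0$ elimination order of the corresponding input. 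Ruling out the simultaneous use of two distinct cells of a row is precisely where the restricted form is needed, and carefully tracking how elimination cost propagates across the two large cliques (and which fill edges accumulate) is the step I expect to be the main obstacle. Once it is in place, the $\Oh(\sqrt{t} \cdot \mathrm{poly})$ vertex bound yields cost $t^{1/2 + o(1)}$, so \thmref{theorem:orcrosscomposition:polynomialbound} gives the theorem; the bound moreover transfers to \Pathwidth since treewidth and pathwidth coincide on cobipartite graphs~\cite{Mohring96}.
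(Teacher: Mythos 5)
Your skeleton matches the paper's: the same source problem \ElimCobipartite distilled from Arnborg et al., the same $2\times\sqrt{t}$ table layout with one instance per pair of cells, the cobipartite trick to get an \OR, and the application of \thmref{theorem:orcrosscomposition:polynomialbound} with $d=2$. However, there is a genuine gap exactly at the point you flag as ``the main obstacle'': the construction you describe is only the bare table embedding, and it does not work. The problem is that a vertex of $Y_b$ is simultaneously adjacent (via instances $(1,b),\dots,(s,b)$) to vertices in \emph{every} top cell, so its elimination cost reflects the union of $s$ different instances rather than any single one; symmetrically for top vertices. Your proposed fix --- eliminate the inactive top cells $X_a$, $a\ne a^*$, first --- makes things worse: since every instance has a perfect matching between its two sides, every vertex $w\in\bigcup_b Y_b$ has a neighbor in each $X_a$, so eliminating all inactive top cells fills in edges making every remaining vertex of $X_{a^*}$ adjacent to \emph{all} $sq$ bottom vertices. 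The first vertex of $X_{a^*}$ then costs about $p+sq$ regardless of which order you use on $X_{a^*}$, which swamps the instance-specific signal of magnitude $k_0<n/2$ entirely. So the ``good order'' on the active cell contributes nothing, and the forward direction fails.

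The paper's construction resolves this with three extra gadgets that your proposal omits and that cannot be recovered from ``normalization of the source problem'' alone. It never eliminates the inactive top cells early; instead it adds, inside the bottom clique, a block of $n/2$ \emph{blanker} vertices $X'_i$ adjacent to $A'_i$ for each $i$, whose elimination makes all of $B'$ adjacent to $A'_i$ at controlled cost, thereby neutralizing (rather than deleting) the inactive columns so that only the adjacency into the single active $A'_{i^*}$ remains variable. It further adds \emph{checking} vertices $C'_i$ (in the top clique, adjacent to $B'_i$) to penalize any order that mixes vertices from two different bottom cells, and \emph{dummy} vertices $D'_i$ to keep the cost of the tail of the order low. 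The sizes $n$, $n$, $n/2$ of these groups, together with the promise $k<n/2$ in \ElimCobipartite, are what make the budget $k'=3rn+\frac{n}{2}+k$ tight enough that the converse direction forces an essentially canonical order committing to one pair $(i^*,j^*)$. Without these gadgets neither direction of your correctness argument goes through, so the proposal as written does not establish the theorem.
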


We prove the theorem by cross-composition. We therefore first define a suitable source problem for the composition in \sectref{section:sourceProblem}, give the construction of the composed instance in \sectref{section:construction}, analyze its properties in \sectref{section:analyzeComposedInstance}, and finally put it all together in \sectref{section:sparsificationProof}.

\subsection{The Source Problem} \label{section:sourceProblem}

The sparsification lower bound for \Treewidth will be established by cross-composing the following problem into it.

\problemdef{\ElimCobipartite}
{A cobipartite graph~$G$ with partite sets~$A$ and~$B$, and a positive integer~$k$, such that the following holds:~$|A|=|B|$,~$|A|$ is even,~$k < \frac{|A|}{2}$, and~$A$ has a perfect matching into~$B$.}
{Is there an elimination order for~$G$ of cost at most~$|A| + k$?}

\noindent The NP-completeness proof extends the completeness proof for \Treewidth~\cite{ArnborgCP87}.

\begin{lemma} \label{lemma:elimCobipartiteNPC}
\ElimCobipartite is NP-complete.
\end{lemma}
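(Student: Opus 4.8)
\noindent The plan is the following. Membership in \NP is immediate: an elimination order~$\pi$ of~$G$ is a certificate of size polynomial in~$|V(G)|$, and its cost~$c_G(\pi)$ can be evaluated in polynomial time by simulating the eliminations and taking the largest closed neighbourhood encountered. For the hardness part, observe first that by \thmref{theorem:tw:eliminationcharacterization} a valid instance~$(G,k)$ with partite sets~$A$ and~$B$ lies in \ElimCobipartite exactly when~$\tw(G)\le |A|+k-1$; so it suffices to prove \NP-hardness of deciding~$\tw(G)\le w$ for cobipartite graphs~$G$ subject to the four structural constraints of the problem, with~$w=|A|+k-1$.

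For the reduction I would start from the \NP-completeness proof for \Treewidth of Arnborg et al.~\cite{ArnborgCP87}, which maps, in polynomial time, an instance~$I$ of an \NP-complete problem to a graph~$G_I$ and an integer~$w$ with~$\tw(G_I)\le w$ if and only if~$I$ is a \yes-instance. The first task is to inspect the gadgets of that construction and verify that~$G_I$ is --- up to harmless modifications --- cobipartite, exhibiting a partition of~$V(G_I)$ into two cliques~$X$ and~$Y$ directly from the construction, and to record the precise dependence of~$w$ on~$|X|$,~$|Y|$ and the parameter of~$I$; this is the sense in which the lemma ``extends'' the ACP proof.

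It then remains to massage~$(G_I,w)$ into an instance satisfying the four side-constraints (that~$|A|=|B|$, that~$|A|$ is even, that~$k<|A|/2$, and that~$A$ has a perfect matching into~$B$) while preserving the answer. The main device is a padding step whose effect on the treewidth is completely controlled: \emph{adding a universal vertex}~$u$ (a vertex adjacent to all others) \emph{raises the minimum cost of an elimination order by exactly one}. The upper bound is obtained by taking an optimal order of~$G_I$ and eliminating~$u$ last, which adds~$u$ to the closed neighbourhood of every other vertex and creates no new fill edges; the matching lower bound is a short argument --- using \lemmaref{lemma:eliminateOnePartiteSet} to restrict attention to orders eliminating one partite set before the other, and \obsref{observation:finishBlock} to make them block-contiguous --- that every order of~$G_I+u$ pays at least one more than the minimum cost of~$G_I$, because if~$u$ is eliminated early it is adjacent to everything still present, and if it is eliminated late the ``bottleneck'' vertex of~$G_I$ still sees~$u$ when it is removed. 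Since a universal vertex may be placed in either clique, cobipartiteness is preserved under this operation. Iterating it, I would first enlarge the two cliques to a common even size; then, after a trivial padding of~$I$ that blows up~$|X|$ and~$|Y|$ without increasing~$w$, set~$k:=w+1+(\text{number of universal vertices added})-|A|$, which is a positive integer and, by the choice of padding, smaller than~$|A|/2$. A perfect matching of~$A$ into~$B$ I would get by checking (and, if necessary, arranging through the choice of source problem and gadgets) that the bipartite graph between the two cliques of~$G_I$ carries a sufficiently large matching, or else by adding a private matched partner to each unsaturated vertex and re-balancing with further universal vertices.

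The step I expect to be the main obstacle is the quantitative bookkeeping in this last part. The four constraints pull the parameters in opposite directions --- a perfect matching of~$A$ into~$B$ wants the cliques large relative to~$w$, whereas~$k<|A|/2$ caps how large they may be --- so making them hold simultaneously requires knowing ACP's threshold precisely (in effect, that after normalising~$I$ one has~$w<\tfrac12|X|+|Y|-1$) and tuning the padding accordingly; moreover every modification must have its effect on the minimum elimination cost pinned down, which is exactly where the elimination-order statements \lemmaref{lemma:eliminateOnePartiteSet} and \obsref{observation:finishBlock} do the work. Once these ingredients are assembled the reduction is polynomial-time and correct, so \ElimCobipartite is \NPc.
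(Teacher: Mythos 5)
Your overall strategy (extend the Arnborg--Corneil--Proskurowski construction, check it is cobipartite with a perfect matching, then pad to meet the side-constraints) is the right one, and you correctly identify the quantitative bookkeeping as the crux. But the padding device you lean on does not resolve it. To keep $|A|=|B|$ you must add universal vertices to \emph{both} cliques; each balanced pair raises the minimum elimination cost by $2$ while raising $|A|$ by only $1$, so $k=(\text{cost threshold})-|A|$ \emph{increases} by one per pair. The constraint $k<|A|/2$ therefore becomes $k_0+p<(|A_0|+p)/2$, which gets strictly harder as the amount of padding $p$ grows. So universal-vertex padding can never be the mechanism that enforces $k<|A|/2$; the gap is that you need a way to inflate $|A|$ without inflating $k$, and you only gesture at it (``a trivial padding of $I$ that blows up $|X|$ and $|Y|$ without increasing $w$'') without saying what it is.

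The paper's resolution is to pad at the level of the \emph{source} problem: it reduces from \textsc{Cutwidth} on subcubic graphs and replaces the input graph by the disjoint union of $20$ copies of itself. Cutwidth is the maximum over connected components, so this multiplies $n$ (hence $|A'|=(\Delta+1)\cdot 20n$) by a constant while leaving the cutwidth target $k$ unchanged, and the resulting threshold $k'=\Delta+k+1$ then satisfies $k'<|A'|/2$ because $\Delta\le 3$ and $k<3n$ (instances with $k\ge 3n$ are trivially \yes for subcubic graphs and are dismissed up front). Two further points your sketch leaves open are handled for free by this route: restricting to subcubic graphs pins down $\Delta$, which enters the ACP threshold $(\Delta+1)(n+1)+k-1$ and hence the value of $k'$; and the ACP construction already yields equal-sized cliques with a perfect matching between them, so no ad-hoc ``private matched partner'' vertices (whose effect on the elimination cost you would still have to control) are needed. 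Without these ingredients your reduction as described cannot be made to satisfy all four constraints simultaneously.
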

\begin{proof}
Membership in NP is trivial. To establish completeness, we use the connection between treewidth and elimination orders. The instances created by the NP-completeness proof for \Treewidth due to Arnborg et al.~\cite{ArnborgCP87} are close to satisfying the desired conditions. In Section 3 of their paper, Arnborg et al.~\cite{ArnborgCP87} reduce the \Cutwidth problem to \Treewidth. They show how to transform an $n$-vertex graph~$G$ with maximum degree~$\Delta$ into a cobipartite graph~$G'$ with partite sets~$A$ and~$B$ of size~$(\Delta + 1)n$, such that~$G$ has cutwidth at most~$k$ if and only if~$G'$ has treewidth at most~$(\Delta+1)(n+1) + k - 1 = |A| + \Delta + k$. By \thmref{theorem:tw:eliminationcharacterization} the latter happens if and only if~$G'$ has an elimination order of cost at most~$|A| + \Delta + k + 1$. It is easy to verify that their construction results in a graph with a perfect matching between the sets~$A$ and~$B$.

Using this information we prove the NP-completeness of \ElimCobipartite. We reduce from \cutwidththree (cf.~\cite[\S 5]{BodlaenderJK12a}), the cutwidth problem on subcubic graphs, which is known to be NP-complete~\cite[Corollary 2.10]{MonienS88}. Given an instance~$(G,k)$ of \cutwidththree, let~$n$ be the number of vertices in~$G$. As the cutwidth of a graph does not exceed its edge count, and a subcubic $n$-vertex graph has at most~$3n$ edges, we may output a constant-size \yes-instance if~$k \geq 3n$. In the remainder we therefore have~$k < 3n$. Form a new graph~$G^*$ as the disjoint union of~$20$ copies of~$G$. The resulting graph~$G^*$ has~$20n$ vertices, and its maximum degree is~$\Delta \leq 3$. As the cutwidth of a graph is the maximum cutwidth of its connected components, graph~$G^*$ has cutwidth at most~$k$ if and only if~$(G,k)$ is a \yes-instance. Now apply the transformation by Arnborg et al.\ to the instance~$(G^*,k)$. It results in a cobipartite graph~$G'$ with partite sets~$A'$ and~$B'$ of 
size~$(\Delta+1)20n$, such that~$G'$ has an elimination order of cost~$|A'| + \Delta + k + 1$ if and only if~$(G,k)$ is a \yes-instance. The construction ensures that~$G'$ has a perfect matching between~$A'$ and~$B'$. Now put~$k' := \Delta + k + 1 < 4 + 3n \leq \frac{|A'|}{2}$. It is easy to see that~$|A'|$ is even. The resulting instance~$(G', A', B', k')$ of \ElimCobipartite therefore satisfies all constraints. As~$G'$ has an elimination order of cost at most~$|A'| + k'$ if and only if~$(G,k)$ has cutwidth at most three, this completes the proof.
\qed
\end{proof}

\subsection{The Construction} \label{section:construction}
We start by defining an appropriate polynomial equivalence relationship~\eqvr. Let all malformed instances be equivalent under~\eqvr, and let two valid instances of \ElimCobipartite be equivalent if they agree on the sizes of the partite sets and on the value of~$k$. This is easily verified to be a polynomial equivalence relation.

Now we define an algorithm that combines a sequence of equivalent inputs into a small output instance. As a constant-size \no-instance is a valid output when the input consists of solely malformed instances, in the remainder we assume that the inputs are well-formed. By duplicating some inputs, we may assume that the number of input instances~$t$ is a square, i.e.,~$t = r^2$ for some integer~$r$. An input instance can therefore be indexed by two integers in the range~$[r]$. Accordingly, let the input consist of instances~$(G_{i,j}, A_{i,j}, B_{i,j}, k_{i,j})$ for~$i,j \in [r]$, that are equivalent under~\eqvr. Thus the number of vertices is the same over all partite sets; let this be~$n = |A_{i,j}| = |B_{i,j}|$ for all~$i,j \in [r]$. Similarly, let~$k$ be the common target value for all inputs. For each partite set~$A_{i,j}$ and~$B_{i,j}$ in the input, label the vertices arbitrarily as~$a_{i,j}^1, \ldots, a_{i,j}^n$ (respectively~$b_{i,j}^1, \ldots, b_{i,j}^n$). We construct a cobipartite graph~$G'$ that 
expresses the \OR of all the inputs, as follows.

\begin{enumerate}
	\item For~$i \in [r]$ make a vertex set~$A'_i$ containing~$n$ vertices~$\hat{a}_i^1, \ldots, \hat{a}_i^n$. 
	\item For~$i \in [r]$ make a vertex set~$B'_i$ containing~$n$ vertices~$\hat{b}_i^1, \ldots, \hat{b}_i^n$.
	\item Turn~$\bigcup _{i\in[r]} A'_i$ into a clique. Turn~$\bigcup _{i\in[r]} B'_i$ into a clique.
	\item For each pair~$i,j$ with~$i,j \in [r]$, we embed the adjacency of~$G_{i,j}$ into~$G'$ as follows: for~$p, q \in [n]$ make an edge~$\{\hat{a}_i^p, \hat{b}_j^q\}$ if~$\{a_{i,j}^p, b_{i,j}^q\} \in E(G_{i,j})$.
\end{enumerate}
It is easy to see that at this point in the construction, graph~$G'$ is cobipartite. For any~$i,j \in [r]$ the induced subgraph~$G'[A'_i \cup B'_j]$ is isomorphic to~$G_{i,j}$ by mapping~$\hat{a}_i^\ell$ to~$a_{i,j}^\ell$ and~$\hat{b}_j^\ell$ to~$b_{i,j}^\ell$. As~$G_{i,j}$ has a perfect matching between~$A_{i,j}$ and~$B_{i,j}$ by the definition of \ElimCobipartite, this implies that~$G'$ has a perfect matching between~$A'_i$ and~$B'_j$ for all~$i,j \in [r]$. These properties will be maintained during the remainder of the construction.
\begin{enumerate}[resume]
  \item For each~$i \in [r]$, add the following vertices to~$G'$:
  \begin{itemize}
  	\item $n$ \emph{checking vertices}~$C'_i = \{c^1_i, \ldots, c^n_i\}$, all adjacent to~$B'_i$.
	\item $n$ \emph{dummy vertices}~$D'_i = \{d^1_i, \ldots, d^n_i\}$, all adjacent to~$\bigcup _{j\in[r]} A'_j$ and to~$C'_i$.
	\item $\frac{n}{2}$ \emph{blanker vertices}~$X'_i = \{x^1_i, \ldots, x^{n/2}_i\}$, all adjacent to~$A'_i$.
  \end{itemize}
 	\item Turn~$\bigcup _{i \in [r]} A'_i \cup C'_i$ into a clique~$A'$. Turn~$\bigcup _{i \in [r]} B'_i \cup D'_i \cup X'_i$ into a clique~$B'$.
\end{enumerate}
The resulting graph~$G'$ is cobipartite with partite sets~$A'$ and~$B'$. Define~$k' := 3rn + \frac{n}{2} + k$. Observe that~$|A'| = 2rn$ and that~$|B'| = 2rn + \frac{rn}{2}$. Graph~$G'$ can easily be constructed in time polynomial in the total size of the input instances.

\textbf{Intuition.} Let us discuss the intuition behind the construction before proceeding to its formal analysis. To create a composition, we have to relate elimination orders in~$G'$ to those for input graphs~$G_{i,j}$. All adjacency information of the input graphs~$G_{i,j}$ is present in~$G'$. As~$A'$ is a clique in~$G'$, by \lemmaref{lemma:eliminateOnePartiteSet} there is a minimum-cost elimination order for~$G'$ that starts by eliminating all of~$B'$. But when eliminating vertices of some~$B'_{j^*}$ from~$G'$, they interact simultaneously with all sets~$A'_i$ ($i \in [r]$), so the cost of those eliminations is not directly related to the cost of elimination orders of a particular instance~$G_{i^*,j^*}$. We therefore want to ensure that low-cost elimination orders for~$G'$ first ``blank out'' the adjacency of~$B'$ to all but one set~$A'_{i^*}$, so that the cost of afterwards eliminating~$B'_{j^*}$ tells us something about the cost of eliminating~$G'_{i^*,j^*}$. To blank out the other adjacencies, we need 
earlier eliminations to make~$B'$ adjacent to all vertices of~$\bigcup _{i \in [r] \setminus \{i^*\}} A'_i$. These adjacencies will be created by eliminating the \emph{blanker} vertices. For an index~$i \in [r]$, vertices in~$X'_i$ are adjacent to~$A'_i$ and all of~$B'$. Hence eliminating a vertex in~$X'_i$ indeed blanks out the adjacency of~$B'$ to~$A'_i$. The weights of the various groups (simulated by duplicating vertices with identical closed neighborhoods) have been chosen such that low-cost elimination orders of~$G'$ starting with~$B'$, have to eliminate $r-1$ blocks of blankers~$X'_{i_1}, \ldots, X'_{i_{r-1}}$ before eliminating any other vertex of~$B'$. This creates the desired blanking-out effect. The checking vertices~$C'_i$ ($i \in [r]$) enforce that after eliminating $r-1$ blocks of blankers, an elimination order cannot benefit by mixing vertices from two or more sets~$B'_i, B'_{i'}$: each set~$B'_i$ from which a vertex is eliminated, introduces new adjacencies between~$B'$ and~$C'_i$. Finally, 
the dummy vertices are used to ensure that after one set~$B'_i \cup D'_i$ is completely eliminated, the cost of eliminating the remainder is small because~$|B'|$ has decreased sufficiently.

\subsection{Properties of the Constructed Instance} \label{section:analyzeComposedInstance}

The following type of elimination orders of~$G'$ will be crucial in the proof.

\begin{definition} \label{definition:canonical}
Let~$i^*, j^* \in [r]$. An elimination order~$\pi'$ of~$G'$ is \emph{$(i^*,j^*)$-canonical} if~$\pi'$ eliminates~$V(G)$ in the following order:
\begin{enumerate}
	\item first all blocks of blanker vertices~$X'_i$ for~$i \in [r] \setminus \{i^*\}$, one block at a time,
	\item then the vertices of~$B'_{j^*}$, followed by dummies~$D'_{j^*}$, followed by blankers~$X'_{i^*}$, 
	\item alternatingly a block~$B'_i$ followed by the corresponding dummies~$D'_i$, until all remaining vertices of~$\bigcup _{i\in[r]} B'_i \cup D'_i$ have been eliminated,
	\item and finishes with the vertices~$\bigcup _{i\in[r]} A'_i \cup C'_i$ in arbitrary order.
\end{enumerate}
\end{definition}

\noindent \lemmaref{lemma:instanceDominatesCanonicalCost} shows that the crucial part of a canonical elimination order is its behavior on~$B'_{j^*}$. 

\begin{lemma} \label{lemma:instanceDominatesCanonicalCost}
Let~$\pi'$ be an $(i^*,j^*)$-canonical elimination order for~$G'$.
\begin{enumerate}
	\item No vertex that is eliminated before the first vertex of~$B'_{j^*}$ costs more than~$3rn$.\label{claim:costBeforeB}
	\item When a vertex of~$D'_{j^*} \cup X'_{i^*}$ is eliminated, its cost does not exceed~$3rn + \frac{n}{2}$. \label{claim:costAfterB}
	\item No vertex that is eliminated after~$X'_{i^*}$ costs more than~$3rn$. \label{claim:costAfterX}
\end{enumerate}
\end{lemma}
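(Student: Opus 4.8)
The plan is to walk through an $(i^*,j^*)$-canonical order and record, for each vertex, exactly which vertices are its neighbours at the moment it is eliminated; the cost is then just the size of that closed neighbourhood. Throughout the process the part of $B'$ not yet eliminated remains a clique, because $B'$ starts as a clique and eliminating a vertex only adds edges. I will use the construction data: $A'=\bigcup_{i\in[r]}(A'_i\cup C'_i)$ is a clique with $|A'|=2rn$, we have $|B'|=2rn+\tfrac{rn}{2}$, and the only edges between $A'$ and $B'$ are the ones embedding the graphs $G_{i,j}$ (between $A'_i$ and $B'_j$) together with the edges $C'_i$--$B'_i$, $D'_i$--$\bigcup_j A'_j$, $D'_i$--$C'_i$, and $X'_i$--$A'_i$.

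First I would isolate a small tool. If at some stage a set $S$ is a clique in the current graph and the canonical order eliminates all of $S$ consecutively, then after the last vertex of $S$ is removed, every vertex that had a neighbour in $S$ becomes adjacent to everything in the former set $N(S)$. Moreover each of the sets eliminated ``as a block'' in a canonical order (namely $X'_i$ for $i\neq i^*$, then $D'_{j^*}$, then $X'_{i^*}$, and — once $B'_{j^*}$ has been removed — each $B'_i$ and $D'_i$ with $i\neq j^*$) has the property that all its members share a closed neighbourhood at the moment the order reaches it: this is clear initially for the dummy/blanker sets, and for the sets $B'_i$ it holds because by then every surviving $B'$-vertex has already been made adjacent to all of $\bigcup_i A'_i$, so the instance-dependent adjacencies are washed out. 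Within such a block every elimination loses one neighbour and gains none, so the cost is non-increasing and it suffices to bound the cost of removing the \emph{first} vertex of each block.

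With this in hand the three statements are bookkeeping. For part~\ref{claim:costBeforeB}: when the order reaches the $\ell$-th blanker block $X'_{i_\ell}$ ($\ell\in[r-1]$, $i_\ell\neq i^*$), the previously eliminated blanker blocks have only made $A'_{i_1}\cup\cdots\cup A'_{i_{\ell-1}}$ adjacent to the remainder of $B'$ (a blanker's neighbourhood never contains a checking vertex), so the first vertex $x$ of $X'_{i_\ell}$ has closed neighbourhood $\{x\}\cup(A'_{i_1}\cup\cdots\cup A'_{i_\ell})\cup\bigl(B'\setminus(X'_{i_1}\cup\cdots\cup X'_{i_{\ell-1}})\bigr)$, of size $\ell n+\bigl(2rn+\tfrac{rn}{2}-(\ell-1)\tfrac n2\bigr)=\tfrac{5rn}{2}+\tfrac{(\ell+1)n}{2}\le 3rn$, using $\ell+1\le r$. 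For part~\ref{claim:costAfterB}: after $\bigcup_{i\neq i^*}X'_i$ and then $B'_{j^*}$ have been removed, the first dummy $d\in D'_{j^*}$ still sees all of $\bigcup_j A'_j$ and all of $C'_{j^*}$, plus the residual part of $B'$, which has lost $(r-1)\tfrac n2+n$ vertices, so $|N[d]|=1+rn+n+(2rn-\tfrac n2)=3rn+\tfrac n2$; eliminating $D'_{j^*}$ then makes every surviving $B'$-vertex adjacent to all of $\bigcup_j A'_j$, and the first blanker $x\in X'_{i^*}$ has $|N[x]|=1+rn+n+(2rn-\tfrac{3n}{2})=3rn-\tfrac n2\le 3rn+\tfrac n2$. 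For part~\ref{claim:costAfterX}: once $B'_{j^*}$, $D'_{j^*}$ and $X'_{i^*}$ are gone, every surviving $B'$-vertex is adjacent to all of $\bigcup_i A'_i$, to $C'_{j^*}$, and to its own checking set; when the order reaches the $\ell$-th pair $(B'_{i_\ell},D'_{i_\ell})$ of step~3 the relevant first vertex additionally sees $C'_{i_1},\dots,C'_{i_{\ell-1}}$, while the residual part of $B'$ has size $2(r-\ell)n$ (before $B'_{i_\ell}$) or $(2(r-\ell)-1)n$ (before $D'_{i_\ell}$); in both cases the closed neighbourhood has size $3rn-(\ell-1)n$, respectively $3rn-\ell n$, which is at most $3rn$ for $\ell\ge 1$. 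Finally, after all of $B'$ has been eliminated the remaining graph is exactly the clique $A'$ on $2rn<3rn$ vertices, so the last phase never exceeds $3rn$.

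The crux is the accumulation bookkeeping in the previous paragraph: one must check that eliminating a blanker block never creates an adjacency to a checking vertex, that it is the dummies $D'_{j^*}$ — not the blankers — whose elimination forces the surviving $B'$-vertices to become adjacent to all of $A'_{i^*}$, and that in step~3 each eliminated block $B'_{i_m}$ contributes exactly one new checking set $C'_{i_m}$ (and no other new $A'$- or $C'$-vertices) to the surviving $B'$-vertices. Getting these rules exactly right is what makes the counts land on $3rn$ and $3rn+\tfrac n2$ rather than something larger; the remainder is arithmetic with $|A'|=2rn$ and $|B'|=2rn+\tfrac{rn}{2}$, using only $\ell\le r-1$ in step~1 and $\ell\ge 1$ in step~3, and the value of $k$ plays no role here.
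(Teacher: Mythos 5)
Your proof is correct and follows essentially the same route as the paper's: phase-by-phase bookkeeping along the canonical order, using the observation that all vertices of each eliminated block share a closed neighbourhood when the order reaches it (so only the first vertex of each block needs to be costed), and your counts ($3rn$ for the blanker phase and for the later $B'_j\cup D'_j$ phase, $3rn+\tfrac{n}{2}$ and $3rn-\tfrac{n}{2}$ for $D'_{j^*}$ and $X'_{i^*}$, and $2rn$ for the final clique $A'$) agree with the paper's. Two cosmetic remarks: in part (2) your written expressions each carry a spurious ``$+1$'', since the eliminated vertex is already included in the residual part of $B'$ (the correct totals are the $3rn+\tfrac{n}{2}$ and $3rn-\tfrac{n}{2}$ you state); and the paper obtains the adjacency of the surviving $B'$-vertices to all of $\bigcup_{i}A'_i$ already from the elimination of $B'_{j^*}$ via its perfect matchings into every $A'_i$, rather than from the dummies $D'_{j^*}$ --- both mechanisms are valid and neither affects any of the counts, since the dummies are eliminated immediately after $B'_{j^*}$.
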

\begin{proof}
\ref{claim:costBeforeB} By \defref{definition:canonical}, all vertices eliminated before~$B'_{j^*}$ are blanker vertices. The elimination of a vertex~$v$ in a block~$X'_i$ turns~$N(v)$ into a clique and removes~$v$. As~$A'$ and~$B'$ are cliques from the start, no extra edges can be introduced between members of~$A'$ or between members of~$B'$. When considering the effects of eliminating vertices from~$B'$, we therefore only have to consider which vertices of~$B'$ become adjacent to vertices in~$A'$. As blanker vertices are not adjacent to checking vertices, no elimination of a blanker vertex introduces adjacencies to sets~$C'_j$ for any~$j \in [r]$. Eliminating~$X'_i$ effectively makes all remaining vertices in~$B'$ adjacent to~$A'_i$, as~$X'_i$ and~$A'_i$ are adjacent by construction. With these insights we prove the first item of the lemma.

Consider the situation when~$0 \leq \ell < r-1$ blocks of blankers~$X'_{i_1}, \ldots, X'_{i_\ell}$ have already been eliminated, and we are about to eliminate a blanker vertex~$v_B$ in the next block~$X'_{i_{\ell+1}}$. Then~$N[v]$ contains all remaining vertices in~$B'$, of which there are~$|B'| - \ell\cdot \frac{n}{2} = 2rn + \frac{(r-\ell)n}{2}$. Now consider the neighborhood of~$v_B$ in~$A'$. As observed above,~$N[v_B]$ does not contain checking vertices. When it comes to adjacencies into~$\bigcup _{i \in [r]} A'_i$, vertex~$v_B$ in block~$X'_{i_{\ell+1}}$ is adjacent to~$A'_{i_{\ell+1}}$, and to the blocks~$A'_{i_1}, \ldots, A'_{i_\ell}$ for the blankers~$X'_{i_1}, \ldots, X'_{i_{\ell}}$ that have previously been eliminated. Hence~$v_B$ has~$(\ell+1)n$ neighbors in~$A'$. Summing up the contributions from the two partite sets, we find~$|N[v]| = 2rn + \frac{(r - \ell)n}{2} + (\ell+1)n = 3rn + \frac{(\ell - r)n}{2} + n$. The largest value is attained when the last block of blankers unequal to~$X'_{i^*}$ is 
about to be eliminated; at that point~$\ell = r-2$ blocks have been eliminated already, which results in a cost of~$3rn$ for the first vertex of the last block that is eliminated. The other vertices~$X'_{i_{\ell + 1}}$ are eliminated immediately after~$v$, by \defref{definition:canonical}. Hence their closed neighborhood at time of elimination is smaller than that of~$v$: elimination of~$v$ does not introduce any new adjacencies for vertices with the same closed neighborhood as~$v$. Hence the remaining vertices of~$X'_{i_{\ell+1}}$ cost less than~$v$. Thus the cost of eliminating the vertices before~$B'_{j^*}$ does not exceed~$3rn$.

\ref{claim:costAfterB} Let~$G'_B$ be the graph that is obtained from~$G'$ by eliminating according to~$\pi'$ until just after the last vertex of~$B'_{j^*}$. Then~$G'_B$ contains exactly one block of blankers~$X'_{i^*}$, as all other sets have been eliminated before~$B'_{j^*}$. It does not contain~$B'_{j^*}$ as it was just eliminated. The elimination of~$B'_{j^*}$ has made the remainder of~$B'$ adjacent to~$\bigcup _{i\in[r]} A'_i$, as~$B'_{j^*}$ has a perfect matching into~$A'_i$ for all~$i \in [r]$. According to \defref{definition:canonical}, elimination order~$\pi'$ eliminates~$D'_{j^*}$ just after~$B'_{j^*}$. At that point, the neighborhood of the dummy vertices~$D'_{j^*}$ into~$\bigcup _{j \in[r]} C'_j$ is exactly~$C'_{j^*}$: they were initially adjacent, the eliminated blanker vertices were not adjacent to any checking vertices, and the eliminated vertices from~$B'_{j^*}$ see only the checking vertices~$C'_{j^*}$, by construction. Hence the cost of eliminating the first dummy vertex in~$D'_{j^*}$ is~$|\
bigcup_{j \in [r] \setminus \{j^*\}} B'_{j^*} \cup D'_{j^*}| + |D'_{j^*}| + |X'_{i^*}| + |\bigcup _{i\in[r]} A'_i| + |C'_{j^*}|$, which is~$2(r-1)n + n + \frac{n}{2} + rn + n = 3rn + \frac{n}{2}$. As the other dummy vertices in~$D'_{j^*}$ have exactly the same closed neighborhood, their elimination is not more expensive.

By \defref{definition:canonical}, order~$\pi'$ follows the elimination of~$D'_{j^*}$ by eliminating~$X'_{i^*}$. At the time of elimination, $N[X'_{i^*}] \cap B'$ has size~$|\bigcup _{j \in[r] \setminus \{j^*\}} B'_j \cup D'_j| + |X'_{i^*}| = 2(r-1)n + \frac{n}{2}$. Vertices in~$X'_{i^*}$ are adjacent to~$\bigcup _{i\in[r]} A'_i$, and to exactly one set of checking vertices, namely~$C'_{j^*}$; the elimination of~$B'_{j^*}$ has introduced these adjacencies. Hence the cost of eliminating the first vertex of~$X'_{i^*}$ is~$2(r-1)n + \frac{n}{2} + rn + n = 3rn - \frac{n}{2}$. As the other vertices in~$X'_{i^*}$ have the same neighborhood, they are not more expensive. In summary, no vertex of~$D'_{j^*} \cup X'_{i^*}$ costs more than~$3rn + \frac{n}{2}$ when eliminated.

\ref{claim:costAfterX} Let~$G'_X$ be the graph that is obtained from~$G'$ by eliminating according to~$\pi'$ until just after the last vertex of~$X'_{i^*}$. Then~$G'_X$ does not contain any blanker vertices, as all such sets have been eliminated. Similarly, it does not contain~$B'_{j^*}$ or~$D'_{j^*}$. The eliminations up until~$X'_{i^*}$ have made all of~$B'$ adjacent to~$\bigcup _{i\in[r]} A'_i$. Vertices in a set~$B'_j \cup D'_j$ for~$j \neq j^*$ are adjacent to the checking vertices~$C'_{j}$ (by construction) and~$C'_{j^*}$ (because the elimination of~$B'_{j^*}$ introduced these adjacencies), but to no other checking vertices. Now consider the first vertex~$v$ that is eliminated after~$X'_{i^*}$; by \defref{definition:canonical} it is contained in some set~$B'_j$ with~$j \neq j^*$. As no blanker vertex remains, and~$B'_{j^*} \cup D'_{j^*}$ have been eliminated, there are exactly~$2rn - 2n$ vertices left in~$B'$. Vertex~$v$ is adjacent to all~$rn$ vertices in~$\bigcup _{i\in[r]} A'_i$, to~$C'_{j^*}$ and 
to the checking vertices corresponding to its own index. Hence the cost of~$v$ is~$2rn-2n + rn + n + n= 3rn$. The cost of the succeeding blankers~$D'_j$ in the same block is not more than that of~$v$. 

When eliminating the next group~$B'_{j'}$, observe that~$2n$ neighbors have been lost in~$B'$ (the set~$B'_j \cup D'_j$ that was eliminated), whereas only~$n$ new neighbors have been introduced (the set~$C'_j$). Hence the cost of later groups of vertices~$B'_{j'}$ does not exceed the cost of~$v$, and so does not exceed~$3rn$. Finally, when all of~$B'$ has been eliminated then only the vertex set~$A'$ of size~$2rn$ remains. At that point, no vertex can have cost more than~$2rn$ as there are only~$2rn$ vertices left in the graph. Thus the cost of eliminating~$A'$ satisfies the claimed bound, after which the entire graph is eliminated.
\qed
\end{proof}

The next lemma links this behavior to the cost of a related elimination order for~$G_{i^*,j^*}$. Some terminology is needed. Consider an $(i^*,j^*)$-canonical elimination order~$\pi'$ for~$G'$, and an elimination order~$\pi$ for~$G_{i^*,j^*}$ that eliminates all vertices of~$B_{i^*,j^*}$ before any vertex of~$A_{i^*,j^*}$. By numbering the vertices in~$B_{i^*,j^*}$ (a partite set of~$G_{i^*,j^*}$) from~$1$ to~$n$, we created a one-to-one correspondence between~$B_{i^*,j^*}$ and~$B'_{j^*}$, the first set of non-blanker vertices eliminated by~$\pi'$. Hence we can compare the relative order in which vertices of~$B_{i^*,j^*}$ are eliminated in~$\pi$ and~$\pi'$. If both~$\pi$ and~$\pi'$ eliminate the vertices of~$B_{i^*,j^*}$ in the same relative order, then we say that the elimination orders \emph{agree on~$B_{i^*,j^*}$}.

\begin{lemma} \label{lemma:canonicalOrderCorresponds}
Let~$\pi'$ be an $(i^*,j^*)$-canonical elimination order of~$G'$. Let~$\pi$ be an elimination order for~$G_{i^*,j^*}$ that eliminates all vertices of $B_{i^*,j^*}$ before any vertex of~$A_{i^*,j^*}$. If~$\pi'$ and~$\pi$ agree on~$B_{i^*,j^*}$, then~$c_{G'}(\pi') = 3rn + \frac{n}{2} - n + c_{G_{i^*,j^*}}(\pi)$.
\end{lemma}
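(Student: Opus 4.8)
The plan is to follow $\pi'$ vertex by vertex on $G'$, show that on the block $B'_{j^*}$ the elimination cost is a fixed additive shift of the corresponding cost of $\pi$ on $G_{i^*,j^*}$, and then conclude that the maximum defining $c_{G'}(\pi')$ is attained precisely while eliminating $B'_{j^*}$.

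First I would pin down the state of $G'$ when $\pi'$ is about to remove the first vertex of $B'_{j^*}$. By \defref{definition:canonical} the only vertices eliminated so far are the $r-1$ blanker blocks $X'_i$ with $i \in [r] \setminus \{i^*\}$, and --- exactly as in the proof of part \ref{claim:costBeforeB} of \lemmaref{lemma:instanceDominatesCanonicalCost} --- their elimination makes every surviving vertex of $B'$ adjacent to all of $\bigcup_{i \in [r] \setminus \{i^*\}} A'_i$ while creating no adjacency whatsoever to any checking set $C'_i$. In particular the bipartite adjacency between $B'_{j^*}$ and $A'_{i^*}$ is still exactly the one fixed by the construction, under which $G'[A'_{i^*} \cup B'_{j^*}]$ is isomorphic to $G_{i^*,j^*}$ via $\hat a_{i^*}^{\ell} \leftrightarrow a_{i^*,j^*}^{\ell}$ and $\hat b_{j^*}^{\ell} \leftrightarrow b_{i^*,j^*}^{\ell}$.

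Next I would compute, for each $m \in [n]$, the cost of eliminating the $m$-th vertex $b'$ of $B'_{j^*}$ under $\pi'$. Its closed neighbourhood at that instant splits into three pieces: (i) the surviving vertices of the clique $B'$, of which there are $|B'| - (r-1)\tfrac n2 - (m-1) = 2rn + \tfrac n2 - m + 1$; (ii) the sets $\bigcup_{i \ne i^*} A'_i$ (full, by the blanker step) and $C'_{j^*}$ (full, by construction of the checking vertices), together $(r-1)n + n = rn$ vertices --- and one must verify that $b'$ is never adjacent to $C'_i$ for $i \ne j^*$, which holds because neither a blanker nor a vertex of $B'_{j^*}$ sees such a set, so no earlier elimination could introduce such an edge; and (iii) the current neighbourhood $N(b') \cap A'_{i^*}$, of some size $\alpha_m$. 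The key observation is that eliminating a vertex of $B'_{j^*}$ can create new $A'$-adjacencies among the remaining vertices of $B'_{j^*}$ only inside $A'_{i^*}$ (it already shares $\bigcup_{i\ne i^*}A'_i$ and $C'_{j^*}$ with them); hence, since $\pi$ and $\pi'$ agree on $B_{i^*,j^*}$, the number $\alpha_m$ equals exactly the number of $A_{i^*,j^*}$-neighbours, at the analogous moment of running $\pi$ on $G_{i^*,j^*}$, of the $m$-th vertex of $B_{i^*,j^*}$ eliminated by $\pi$. As $B_{i^*,j^*}$ is a clique on $n$ vertices that $\pi$ eliminates first, that vertex costs $(n - m + 1) + \alpha_m$, so the cost of $b'$ in $G'$ equals $3rn + \tfrac n2 - m + 1 + \alpha_m = 3rn + \tfrac n2 - n + \bigl((n-m+1) + \alpha_m\bigr)$, that is, exactly $3rn + \tfrac n2 - n$ more than the cost of the corresponding vertex of $G_{i^*,j^*}$.

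Finally I would take maxima. Over $m$ this gives that the largest cost of $\pi'$ on $B'_{j^*}$ equals $3rn + \tfrac n2 - n + \max_{v \in B_{i^*,j^*}} (\text{cost of }v\text{ under }\pi)$. Since $G_{i^*,j^*}$ contains the clique $B_{i^*,j^*}$ on $n$ vertices and $\pi$ eliminates $B_{i^*,j^*}$ before $A_{i^*,j^*}$, every $B_{i^*,j^*}$-cost is at least $n$ while every $A_{i^*,j^*}$-vertex costs at most $|A_{i^*,j^*}| = n$; hence $c_{G_{i^*,j^*}}(\pi) = \max_{v \in B_{i^*,j^*}}(\text{cost of }v) \ge n$, so the cost of $\pi'$ on $B'_{j^*}$ is $3rn + \tfrac n2 - n + c_{G_{i^*,j^*}}(\pi) \ge 3rn + \tfrac n2$. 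As parts \ref{claim:costBeforeB}, \ref{claim:costAfterB} and \ref{claim:costAfterX} of \lemmaref{lemma:instanceDominatesCanonicalCost} bound the cost of every vertex eliminated outside $B'_{j^*}$ by $3rn + \tfrac n2$, the maximum $c_{G'}(\pi')$ is attained on $B'_{j^*}$ and equals $3rn + \tfrac n2 - n + c_{G_{i^*,j^*}}(\pi)$, as claimed. The real obstacle is the neighbourhood bookkeeping of the third paragraph --- establishing that $N[b']$ decomposes exactly as (i)+(ii)+(iii), that only $C'_{j^*}$ among the checking sets is ever reached, that dummies and blankers influence $b'$ only through the clique $B'$, and that the hypothesis ``$\pi'$ and $\pi$ agree on $B_{i^*,j^*}$'' transfers verbatim to the growth of $b'$'s neighbourhood inside $A'_{i^*}$, so that $\alpha_m$ is genuinely identical in the two processes; the rest is routine counting or a direct appeal to \lemmaref{lemma:instanceDominatesCanonicalCost}.
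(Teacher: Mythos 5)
Your proposal is correct and follows essentially the same route as the paper: the same decomposition of the closed neighbourhood of a $B'_{j^*}$-vertex yielding the additive shift $3rn+\frac n2-n$, the same isomorphism-preservation argument for the induced subgraph on $A'_{i^*}\cup B'_{j^*}$, and the same appeal to \lemmaref{lemma:instanceDominatesCanonicalCost} to show the maximum is attained on $B'_{j^*}$ (the paper sharpens the lower bound to $n+1$ via the perfect matching, but your bound of $n$ already suffices since the other costs are at most $3rn+\frac n2$).
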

\begin{proof}
Consider the graph~$G'_B$ obtained from~$G'$ by performing the eliminations according to~$\pi'$ until we are about to eliminate the first vertex of~$B'_{j^*}$. By \defref{definition:canonical} this means that all blocks of blankers~$X'_j$ for~$j \neq j^*$ have been eliminated, and no other vertices. Using the construction of~$G'$ it is easy to verify that these eliminations have made all remaining vertices of~$B'$ adjacent to~$\bigcup _{i \in [r] \setminus \{i^*\}} A'_i$, and that no new adjacencies have been introduced to~$\bigcup _{i \in [r]} C'_i$ or to~$A'_{i^*}$. Graph~$G'[A'_{i^*} \cup B'_{j^*}]$ was initially isomorphic to~$G_{i^*,j^*}$ by the obvious isomorphism based on the numbers assigned to the vertices. As no vertex adjacent to~$A'_{i^*}$ has been eliminated yet, this also holds for~$G'_B[A'_{i^*} \cup B'_{j^*}]$.

Consider what happens when eliminating the first vertex~$v'$ of~$B'_{j^*}$ according to~$\pi'$. Let~$v \in B_{i^*,j^*}$ be the corresponding vertex in~$G_{i^*,j^*}$. By the fact that the elimination orders agree,~$v$ is the first vertex of~$B_{i^*,j^*}$ to be eliminated under~$\pi$.

The set~$N_{G'_B}[v']$ contains~$C'_{j^*}$, $\bigcup_{j \neq j^*} B'_j \cup D'_j$, $\bigcup _{i \neq i^*} A'_i$, $X'_{i^*}$, $D'_{j^*}$, and the vertices of~$G'[A'_{i^*} \cup B'_{j^*}]$ that correspond exactly to~$N_{G_{i^*,j^*}}[v]$ by the isomorphism. So the cost of eliminating~$v'$ from~$G'$ exceeds the cost of eliminating~$v$ from~$G_{i^*,j^*}$ by exactly~$|C'_{j^*}| + |\bigcup_{j \neq j^*} B'_j \cup D'_j| + |\bigcup _{i \neq i^*} A'_i| + |X'_{i^*}| + |D'_{j^*}| = n + 2(r-1)n + (r-1)n + \frac{n}{2} + n = 3rn + \frac{n}{2} - n$.
Now observe that by the isomorphism, eliminating~$v'$ from~$G'$ has exactly the same effect on the neighborhoods of~$B'_{j^*}$ into~$A'_{i^*}$, as eliminating~$v$ from~$G_{i^*,j^*}$ has on the neighborhoods of~$B_{i^*,j^*}$ into~$A_{i^*,j^*}$. Thus after one elimination, the remaining vertices of~$A'_{i^*} \cup B'_{j^*}$ and~$A_{i^*,j^*} \cup B_{i^*,j^*}$ induce subgraphs of~$G'$ and~$G_{i^*,j^*}$ that are isomorphic. Hence we may apply the same argument to the next vertex that is eliminated. Repeating this argument we establish that for each vertex in~$B'_{j^*}$, its elimination from~$G'$ costs exactly~$3rn + \frac{n}{2} - n$ more than the corresponding elimination in~$G_{i^*,j^*}$.

Now consider the cost of~$\pi$ on~$G_{i^*,j^*}$: it is at least~$n+1$, as the first vertex to be eliminated is adjacent to all of~$B_{i^*,j^*}$ (the graph is cobipartite) and to at least one vertex of~$A_{i^*,j^*}$ (since the \ElimCobipartite instance~$G_{i^*,j^*}$ has a perfect matching between its two partite sets). After all vertices of~$B_{i^*,j^*}$ have been eliminated from~$G_{i^*,j^*}$, the remaining vertices cost at most~$n$; there are at most~$n$ vertices left in the graph at that point. Hence the cost of~$\pi$ on~$G_{i^*,j^*}$ is determined by the cost of eliminating~$B_{i^*,j^*}$. For each vertex from that set that is eliminated,~$\pi'$ incurs a cost exactly~$3rn + \frac{n}{2} - n$ higher. Hence~$c_{G'}(\pi')$ is at least~$(3rn + \frac{n}{2} - n) + (n+1) = 3rn + \frac{n}{2} + 1$. By \lemmaref{lemma:instanceDominatesCanonicalCost} the cost that~$\pi'$ incurs before eliminating the first vertex of~$B'_{j^*}$ is at most~$3rn$, the cost of eliminating~$D'_{j^*} \cup X'_{i^*}$ is at most~$3rn + \frac{n}
{2}$, and the cost incurred after eliminating the last vertex of~$B'_{j^*}$ is at most~$3rn$. Hence the cost of~$\pi'$ is determined by the cost of eliminating the vertices of~$B'_{j^*}$. As this is exactly~$3rn + \frac{n}{2} - n$ more than the cost of~$\pi$ on~$G_{i^*,j^*}$, this proves the lemma.
\qed
\end{proof}

The last technical step of the proof is to show that if~$G'$ has an elimination order of cost at most~$k'$, then it has such an order that is canonical.

\begin{lemma} \label{lemma:exists:canonical:opt}
If $G'$ has an elimination order of cost at most~$k'$, then there are indices~$i^*,j^* \in [r]$ such that~$G'$ has an $(i^*,j^*)$-canonical elimination order of cost at most~$k'$.
\end{lemma}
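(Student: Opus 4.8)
The plan is to take an elimination order $\sigma$ of $G'$ with $c_{G'}(\sigma)\le k'$, read a pair of indices $i^*,j^*$ off its behaviour around the first dummy block it eliminates, and then \emph{rebuild} an $(i^*,j^*)$-canonical order of at most the same cost via \lemmaref{lemma:canonicalOrderCorresponds}, rather than literally reshaping $\sigma$. First I would normalize $\sigma$: since $A'$ is a clique, \lemmaref{lemma:eliminateOnePartiteSet} lets us replace it by a minimum-cost order eliminating all of $B'$ before any vertex of $A'$ --- still of cost at most $k'$, because $\sigma$ witnesses that the minimum is at most $k'$ --- and repeatedly applying \obsref{observation:finishBlock} to pairs of vertices sharing a block makes $\sigma$ block-contiguous, so each blanker block $X'_i$, dummy block $D'_i$ and checking block $C'_i$ is eliminated in a single run whose first vertex is the costliest.

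The core is a tight count of closed-neighbourhood sizes. Let $j^*$ index the first dummy block $\sigma$ eliminates and $d$ its first vertex. Then $d$ is adjacent to all $rn$ vertices of $\bigcup_i A'_i$, to its own checking block $C'_{j^*}$, and to every still-present vertex of $B'_{j^*}$, while every already-eliminated blanker block has removed only $\frac{n}{2}$ vertices from $B'$; substituting into $|N[d]|\le k'=3rn+\frac{n}{2}+k$ and using $k<\frac{n}{2}$ forces that at this moment at least $r-1$ blanker blocks are gone and at most $k$ vertices of $B'_{j^*}$ remain. A similar estimate on the first vertex eliminated from any \emph{other} instance set $B'_{j'}$ --- also adjacent to $C'_{j'}$, hence costlier --- shows no such vertex is eliminated until $B'_{j^*}$ is all but (at most $k$) eliminated. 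I then pick $i^*$ to be a blanker index whose block $\sigma$ does not eliminate before $d$. The pay-off is that while $\sigma$ eliminates those first (at least $n-k$) vertices of $B'_{j^*}$, the induced subgraph $G'[A'_{i^*}\cup B'_{j^*}]$ is still an \emph{undistorted} copy of $G_{i^*,j^*}$: no dummy and no vertex of $X'_{i^*}$ --- the only remaining vertices adjacent to all of $A'_{i^*}$ --- has been eliminated yet, nor any vertex of another instance set, so no spurious edge into $A'_{i^*}$ has been created.

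Transporting this initial segment through the isomorphism yields a partial elimination order of $G_{i^*,j^*}$ on its part $B_{i^*,j^*}$, and the offset computation underlying \lemmaref{lemma:canonicalOrderCorresponds} shows each of its steps costs at least $3rn+\frac{n}{2}-n$ less than the corresponding step of $\sigma$, hence at most $k'-(3rn+\frac{n}{2}-n)=n+k$. Completing it with the remaining (at most $k$) vertices of $B_{i^*,j^*}$ --- each of which, with $A_{i^*,j^*}$ still untouched, sees at most $|A_{i^*,j^*}|+k\le n+k$ vertices --- and then with $A_{i^*,j^*}$ gives an elimination order $\pi$ of $G_{i^*,j^*}$, starting with $B_{i^*,j^*}$, of cost at most $n+k$. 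Taking $\pi'$ to be the $(i^*,j^*)$-canonical order of $G'$ whose relative order on $B'_{j^*}$ matches $\pi$, \lemmaref{lemma:canonicalOrderCorresponds} gives $c_{G'}(\pi')=3rn+\frac{n}{2}-n+c_{G_{i^*,j^*}}(\pi)\le 3rn+\frac{n}{2}-n+(n+k)=k'$, which proves the lemma.

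I expect the hardest step to be establishing the forced structure. The slack between $k'$ and the cost a canonical order necessarily incurs on $B'_{j^*}$ is only $k<\frac{n}{2}$, so the count must track exactly the cancellation between the vertices an instance-set elimination deletes from $B'$ and the whole checking block $C'_{j'}$ it permanently attaches to the common neighbourhood of $B'$ --- a coarse estimate would lose an additive $\Theta(n)$ and break. One also has to dispose of the boundary case in which all $r$ blanker blocks precede $d$ --- which, using that $n=|A_{i,j}|$ is even, can arise only for $k=\frac{n}{2}-1$ --- for instance by checking that relocating one blanker block to just after $D'_{j^*}$ does not raise the cost.
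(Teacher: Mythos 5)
Your endgame---extracting a cheap order $\pi$ for $G_{i^*,j^*}$ from $\sigma$ and then rebuilding a canonical order via \lemmaref{lemma:canonicalOrderCorresponds}, rather than reshaping $\sigma$ itself---is legitimate, and your accounting at the first dummy $d$ (provided it includes the checking blocks $C'_{j'}$ attached by every touched instance set, which cancel against the at most $n$ vertices each such set removes from $B'$) does force that $r-1$ blanker blocks are gone and $B'_{j^*}$ is all but $k$ vertices eliminated \emph{by the time $d$ is eliminated}. The gap is that this says nothing about the \emph{relative} order of those blanker blocks and the vertices of $B'_{j^*}$, and your cost transfer needs it. The per-step offset $3rn+\frac{n}{2}-n$ of \lemmaref{lemma:canonicalOrderCorresponds} presumes that all $r-1$ blocks $X'_i$, $i\neq i^*$, are already gone when $v'\in B'_{j^*}$ is eliminated, so that $v'$ sees all $(r-1)n$ vertices of $\bigcup_{i\neq i^*}A'_i$. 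If only $\ell'<r-1$ blocks are gone at that moment, $v'$ gains the $(r-1-\ell')\frac{n}{2}$ surviving blanker vertices but may see only one vertex (its matching partner) in each unblanked $A'_i$, so the guaranteed surplus over $|N_{G_{i^*,j^*}}[v]|$ drops by up to $(r-1-\ell')(\frac{n}{2}-1)$ and the bound $c_{G_{i^*,j^*}}(\pi)\le n+k$ no longer follows. Nothing in your argument excludes such interleaving: eliminating the first vertices of $B'_{j^*}$ early is itself cheap (about $\frac{5rn}{2}+n$, well below $k'$ for $r\ge 2$), and the inequality at $d$ is indifferent to \emph{when} the blankers were eliminated relative to $B'_{j^*}$.

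What actually forbids the interleaving is the cost of the blanker vertices eliminated \emph{after} an instance vertex: once some $u\in B'_{j}$ is gone, every surviving blanker is adjacent to $C'_{j}$, and the paper's Claim~\ref{claim:firstEliminateMostBlankers} (Case 2) shows that the first vertex of the $(r-1)$-th blanker block then costs at least $3rn+n>k'$ (Case 1 handles a fully eliminated $B'_j$ via its last vertex). You need this claim---no vertex of $\bigcup_j B'_j\cup D'_j$ precedes the start of the $(r-1)$-th blanker block---before your transported order $\pi$ can be certified to cost at most $n+k$. A similarly explicit check is needed for your assertion that no vertex of a second instance set $B'_{j'}$ intervenes (the paper's Claim~\ref{claim:finishOneInstance} computes the cost of such a vertex as at least $3rn+\frac{n}{2}+n$, crucially using the still-present $X'_{i^*}$); and your remark that the all-$r$-blanker-blocks-before-$d$ case arises only for $k=\frac{n}{2}-1$ is not correct as stated, though that case becomes vacuous once the Claim~\ref{claim:firstEliminateMostBlankers}/\ref{claim:thenEliminateInstance} analysis is in place.
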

\begin{proof}
Let~$\pi'$ be an elimination order for~$G'$ of cost at most~$k'$. As~$A'$ is a clique in~$G'$, we may assume by \lemmaref{lemma:eliminateOnePartiteSet} that~$\pi'$ eliminates all vertices of~$B'$ before any vertex of~$A'$ (Property 1). As each set~$X'_i$ forms a block in~$G'$, by \obsref{observation:finishBlock} we can adapt~$\pi'$ such that it eliminates the vertices of a set~$X'_i$ contiguously for all~$i \in [r]$ (Property 2). Note that for~$i \in [r]$ and vertices~$d \in D'_i$ and~$u \in B'_i$, we have~$N_G[u] \subseteq N_G[d]$ by construction. Hence by \obsref{observation:finishBlock} we may assume that when a dummy~$d \in D'_i$ is about to be eliminated, all vertices of the corresponding set~$B'_i$ are already eliminated (Property 3). Using these structural properties we proceed with the proof.

Consider the process of eliminating~$G'$ by~$\pi'$. At some point,~$\pi'$ has eliminated $r-2$ distinct blocks of blankers~$X'_{i_1}, \ldots, X'_{i_{r-2}}$. Consider the first vertex~$v_B$ of the blanker~$X'_{i_{r-1}}$ that is eliminated after that point, and note that possibly non-blanker vertices are eliminated in between. Let~$G'_B$ be the graph obtained from~$G'$ by eliminating all vertices before~$v_B$. 

\begin{numberedclaim} \label{claim:firstEliminateMostBlankers}
Graph~$G'_B$ still contains the vertices~$\bigcup _{i\in[r]} B'_i \cup D'_i$: no vertex in this set is eliminated before~$v_B$.
\end{numberedclaim}
\begin{proof}
Assume for a contradiction that some vertex of~$\bigcup _{i\in[r]} B'_i \cup D'_i$ is eliminated before~$v_B$. We first show how to derive a contradiction when there is an index~$i \in [r]$ such that~$B'_i$ is eliminated completely before~$v_B$ (Case 1). Afterwards we show how to derive a contradiction when at least one vertex of~$B'_i$ remains in~$G'_B$ for all~$i \in [r]$ (Case 2).

\textbf{Case 1.} If there is an index~$j$ such that no vertex of~$B'_j$ remains in~$G'_B$, then let~$j^*$ be the index of the first set~$B'_j$ to be eliminated from~$G'$ completely. Consider the moment when the last vertex~$u$ of~$B_{j^*}$ is eliminated. By our choice of~$j^*$ and Property 3, we know that no dummy vertex has been eliminated yet. So consider the closed neighborhood of~$u$ at the moment of its elimination. It contains all~$rn$ dummy vertices. As at least two blocks of blankers remain,~$N[u]$ contains at least~$\frac{2n}{2}$ blanker vertices. We claim that~$u$ has become adjacent to all vertices of~$\bigcup _{i\in[r]} A'_i$. To see this, recall that~$u$ was the last vertex of~$B_{j^*}$ to be eliminated. As we observed during the construction of~$G'$, there is a perfect matching between~$A'_i$ and~$B'_{j^*}$ for all~$i \in [r]$. Hence for each vertex in~$\bigcup _{i\in[r]} A'_i$, if~$u$ was not originally adjacent to it, then it has become adjacent to it by eliminating the vertex of~$B'_{j^*}$ 
that was matched to it. Thus~$u$ is indeed adjacent to all of~$\bigcup _{i\in[r]} A'_i$. For each vertex in a set~$B'_j$ with~$j \neq j^*$ that is eliminated before~$u$, the elimination has made~$u$ adjacent to~$C'_j$. By our choice of~$j^*$, no such set~$B'_j$ is eliminated completely. Hence for each set~$B'_j$ with~$j \neq j^*$ from which (less than~$n$) vertices were eliminated,~$u$ has picked up~$n$ new neighbors in the set~$C'_j$. So the number of neighbors of~$u$ in~$\bigcup_{j \in [r] \setminus \{j^*\}} B'_j \cup C'_j$ is at least~$(r-1)n$. Adding up the contribution of the blankers, the dummies, of~$\bigcup _{i\in[r]} A'_i$, of~$\bigcup_{j \in [r] \setminus \{j^*\}} B'_j \cup C'_j$, and of~$C'_{j^*}$, to~$N[u]$, we find that~$|N[u]| \geq \frac{2n}{2} + rn + rn + (r-1)n + n \geq 3rn + n$. This value exceeds~$k'$, as~$k < \frac{n}{2}$ by the definition of \ElimCobipartite. Hence we find a contradiction to the assumption that~$\pi'$ has cost at most~$k'$.

\textbf{Case 2.} Assume now that for each~$j \in [r]$ at least one vertex of~$B'_j$ remains in~$G'_B$, which implies by Property 3 that all dummies are present in~$G'_B$. Recall that we assumed, for a contradiction, that some vertex~$u$ of~$\bigcup _{j\in[r]} B'_j \cup D'_j$ was eliminated before~$v_B$. As~$u$ is no dummy, it is contained in some set~$B'_{j^*}$. By the adjacency of~$B'_{j^*}$ to~$C'_{j^*}$, the elimination has made the blanker~$X'_{i_{r-1}}$ adjacent to~$C'_{j^*}$. We will show that this causes the cost of~$v_B$ to exceed~$k'$. 
	
To see this, consider the neighbors of~$v_B$ in the various sets. For each set~$B'_j$ from which  vertices were eliminated, we have eliminated less than~$n$ vertices (at least one vertex remains by the precondition to this case). For those sets~$B'_j$, the blankers~$X'_{i_{r-1}}$ have picked up adjacencies to the corresponding checkers~$C'_j$. Thus~$|N[v_B] \cap (\bigcup_{j \in[r] \setminus \{j^*\}} B'_j \cup C'_j)| \geq (r-1)n$. As~$v_B$ is the first blanker vertex to be eliminated after~$r-2$ blocks of blankers were already eliminated, there are two blocks of blankers left, giving~$\frac{2n}{2}$ vertices in~$N[v_B] \cap (\bigcup _{i \in [r]} X'_i)$. The prior eliminations of blankers~$X'_{i_1}, \ldots, X'_{i_{r-2}}$ made~$X'_{i_{r-1}}$ adjacent to the corresponding sets~$A'_{i_1}, \ldots, A'_{i_{r-2}}$, and by construction~$v_B \in X'_{i_{r-1}}$ is adjacent to~$A'_{i_{r-1}}$. Now consider the remaining index~$i_r \in [r] \setminus \{i_1, \ldots, i_{r-1}\}$, and let~$i^* := i_r$ for convenience.

Recall that~$B'_{j^*}$ has a perfect matching into~$A'_{i^*}$ by the construction of~$G'$. Hence for each vertex vertex~$u$ that was eliminated from~$B'_{j^*}$, vertex~$v_B$ has become adjacent to~$u$'s matching partner in the set~$A'_{i^*}$. Hence, letting~$\ell$ denote the number of vertices eliminated from~$B'_{j^*}$, we know that~$v_B$ is adjacent to at least~$\ell$ vertices in~$A'_{i^*}$. Summing up the contributions of the blankers, the dummies, the set~$(\bigcup_{i \in[r] \setminus \{j^*\}} B'_i \cup C'_i)$, the set~$C'_{j^*}$, the set~$\bigcup _{i \in [r] \setminus \{i^*\}} A'_i$, and the set~$A'_{i^*} \cup B'_{j^*}$ to~$|N[v_B]|$, we find that~$|N[v_B]| \geq \frac{2n}{2} + rn + (r-1)n + n + (r-1)n + n \geq 3rn + n > k'$, which is a contradiction to the assumption that the cost of~$\pi'$ is at most~$k'$.

As the two cases are exhaustive, we have established that when~$v_B$ is eliminated, all vertices of~$\bigcup _{i\in[r]} B'_i \cup D'_i$ still remain in the graph~$G'_B$. 
\claimqed
\end{proof}

We need two more claims to complete the proof of \lemmaref{lemma:exists:canonical:opt}. As~$\pi'$ is block-contiguous with respect to the blankers (Property 2), after~$v_B$ it eliminates the rest of~$X'_{i_{r-1}}$. Afterwards only a single group of blankers remains, say~$X'_{i_r}$.

\begin{numberedclaim} \label{claim:thenEliminateInstance}
After eliminating~$X'_{i_{r-1}}$, order~$\pi'$ eliminates a vertex in~$\bigcup _{i\in[r]} B'_i$.
\end{numberedclaim}
\begin{proof}
By Property 1, all vertices of~$B'$ are eliminated before any vertex of~$A'$. Recall that~$B'$ consists of blankers~$\bigcup_{i\in[r]} X'_i$ and the vertices~$\bigcup _{i\in[r]} B'_i \cup D'_i$. As~$X'_{i_{r-1}}$ is the $r-1$-th block of blankers to be eliminated, afterwards the only vertices in~$B'$ remaining are~$X'_{i_r}$ and~$\bigcup _{i \in [r]} B'_i \cup D'_i $. By Property 3,~$\pi'$ eliminates all vertices of~$B'_i$ before eliminating a dummy in the corresponding set~$D'_i$. Hence if~$\pi'$ does not follow the elimination of~$X'_{i_{r-1}}$ by a vertex of~$\bigcup _{i\in[r]} B'_i$, it eliminates~$X'_{i_r}$. If this is the case, then all blankers have been eliminated before eliminating any vertex of~$\bigcup _{i \in [r]} B'_i \cup D'_i$. Now consider the first vertex~$u$ of~$\bigcup _{i\in[r]} B'_i$ that is eliminated, and suppose it is contained in~$B'_{j^*}$. Eliminating all blankers has made~$B'_{j^*}$ adjacent to all of~$\bigcup_{i\in[r]} A'_i$. By construction~$B'_{j^*}$ is adjacent to the~$n$ 
checking vertices~$C'_{j^*}$. By Property 3 it is adjacent to all dummies. Summing up the contributions of the dummies, of~$\bigcup _{i\in[r]} B'_i$, of~$\bigcup _{i \in [r]} A'_i$, and of the single set~$C'_{j^*}$, to~$N[u]$, we find that the cost of~$u$ is at least~$rn + rn + rn + n > k'$; a contradiction.
\claimqed
\end{proof}

Before proving the next claim, we make an observation. Let~$u$ be the first vertex of~$\bigcup _{i \in [r]} B'_i$ that is eliminated by~$\pi'$, and suppose that~$u \in B'_{j^*}$. The elimination of~$u$ makes the last group of blankers~$X'_{i_r}$ adjacent to the checking vertices~$C'_{j^*}$, as~$B'_{j^*}$ is adjacent to~$C'_{j^*}$. This implies that after the elimination of~$u \in B'_{j^*}$, the closed neighborhood of~$X'_{i^*}$ is a superset of the closed neighborhood of a remaining vertex in~$B'_{j^*}$. To see this, note that at that stage,~$X'_{i^*}$ is adjacent to the remainder of~$B'$, to~$\bigcup_{i \in [r] \setminus \{i^*\}} A'_i$ (by eliminating the previous blankers), to~$A'_{i^*}$ (by construction), and to~$C'_{j^*}$ (by eliminating~$u$). On the other hand, vertices in~$B'_{j^*}$ see the remainder of~$B'$, they see~$\bigcup_{i \in [r] \setminus \{i^*\}} A'_i$, a subset of~$A'_{i^*}$ that depends on the edges in the graph~$G_{i^*,j^*}$, and~$C'_{j^*}$. 
Hence, by the same reasoning as in \obsref{observation:finishBlock}, if a vertex~$z \in X'_{i^*}$ is eliminated after the \emph{first} vertex of~$B'_{j^*}$ (i.e.,~$u$) but before the \emph{last} vertex of~$B'_{j^*}$, then the cost of~$\pi'$ does not increase when eliminating all vertices of~$B'_{j^*}$ just before~$z$. Hence we may assume that~$\pi'$ eliminates all of~$B'_{j^*}$ before any vertex of~$X'_{i^*}$; we call this Property 4. We use this in the proof of the following claim.

\begin{numberedclaim} \label{claim:finishOneInstance}
All vertices of~$B'_{j^*}$ are eliminated before any vertex of~$\bigcup _{j \in [r] \setminus \{j^*\}} B'_j$.
\end{numberedclaim}
\begin{proof}
By Property 4, all vertices of~$B'_{j^*}$ are eliminated before the last blanker~$X'_{i^*}$. Now suppose that before eliminating the last vertex of~$B'_{j^*}$, order~$\pi'$ eliminates some vertex~$v \in B'_{j'}$ with~$j' \neq j^*$. Let~$v$ be the first vertex with this property. By Property 4, all vertices in~$X'_{i^*}$ remain in the graph when~$v$ is eliminated. This causes the cost of~$v$ to exceed~$k'$. To see this, observe that at the time of elimination, the closed neighborhood of~$v$ contains all~$rn$ dummy vertices (by Property 3), it contains the~$\frac{n}{2}$ vertices of~$X'_{i^*}$, it contains~$C'_{j^*}$ (by elimination of~$u$) and~$C'_{j'}$ (by construction), which contain~$n$ vertices each. Additionally,~$N[v]$ contains~$\bigcup _{i\in[r] \setminus \{j^*\}} B'_i$ by our choice of~$v$, and~$\bigcup _{i \in [r] \setminus \{i^*\}} A'_i$ by the eliminations of earlier groups of blankers, for a subtotal of~$rn + \frac{n}{2} + 2n + (r-1)n + (r-1)n = 3rn + \frac{n}{2}$. If~$\ell$ vertices have been 
eliminated from~$B'_{j^*}$ prior to elimination of~$v$, then~$N[v]$ contains~$n - \ell$ vertices from~$B'_{j^*}$, but has gained~$\ell$ neighbors in~$A'_{i^*}$ by the perfect matching between~$A'_{i^*}$ and~$B_{j^*}$ in~$G'$. Hence the remaining vertices in~$A'_{i^*} \cup B'_{j^*}$ contribute at least~$\ell + (n - \ell)$ vertices to the cost of~$v$. Thus the cost of~$v$ is at least~$3rn + \frac{n}{2} + n$, which is more than~$k'$; a contradiction.
\claimqed
\end{proof}

Using Claims~\ref{claim:firstEliminateMostBlankers},~\ref{claim:thenEliminateInstance}, and~\ref{claim:finishOneInstance}, we prove \lemmaref{lemma:exists:canonical:opt}. By Property 1, elimination order~$\pi'$ eliminates~$B'$ before~$A'$. By \claimref{claim:firstEliminateMostBlankers},~$\pi'$ did not eliminate any vertex of~$\bigcup _{i \in [r]} B'_i \cup D'_i$ when the first vertex of the $r-1$-th block of blankers is eliminated. As~$\pi'$ is block-contiguous with respect to the blankers, its initial behavior matches that of a canonical elimination order (\defref{definition:canonical}): it eliminates~$r-1$ distinct blocks of blankers~$X'_{i_1}, \ldots, X'_{i_{r-1}}$ before any vertex of~$\bigcup _{j\in[r]} B'_j \cup D'_j$. By \claimref{claim:thenEliminateInstance} it then eliminates a vertex of~$\bigcup _{j\in[r]} B'_j$, say a vertex in~$B'_{j^*}$. By \claimref{claim:finishOneInstance} it completes the elimination of~$B'_{j^*}$ before touching vertices in~$\bigcup _{j \in [r] \setminus \{j^*\}} B'_j$, by 
Property 3 it eliminates~$B'_{j^*}$ before any dummy, and by Property 4 it eliminates~$B'_{j^*}$ before the last blanker~$X'_{i_r}$. Hence after the~$r-1$ blocks of blankers, the vertices of~$B'_{j^*}$ are eliminated consecutively. 

Once this is done, the closed neighborhoods of~$D'_{j^*}$ and~$X'_{i^*}$ coincide: by the perfect matchings between~$B'_{j^*}$ and~$A'_i$ (for all~$i \in [r]$) in~$G'$, eliminating all of~$B'_{j^*}$ made~$D'_{j^*}$ and~$X'_{i^*}$ adjacent to~$\bigcup _{i\in[r]}A'_i$. Furthermore,~$N[D'_{j^*}] \cap \bigcup _{i \in [r]} C'_i = N[X'_{i^*}] \cap \bigcup _{i \in [r]} C'_i = C'_{j^*}$: the dummies see~$C'_{j^*}$ by construction, while~$X'_{i^*}$ sees it because of the elimination of~$B'_{j^*}$. The closed neighborhoods of~$D'_{j^*}$ and~$X'_{i^*}$ are subsets of the closed neighborhoods of the other vertices that remain in~$B'$ at that point: vertices in a set~$B'_j \cup D'_j$ for~$j \neq j^*$ see~$\bigcup _{i \in[r]}A'_i$ together with both~$A'_{j^*}$ and~$A'_j$, while the latter set is not seen by~$D'_{j^*} \cup X'_{i^*}$. Hence by \obsref{observation:finishBlock} we may assume that after finishing~$B'_{j^*}$, order~$\pi'$ eliminates~$D'_{j^*}$ followed by~$X'_{i^*}$. 

Once that is done, the only vertices remaining in~$B'$ are $\bigcup _{i \in [r] \setminus \{j^*\}} B'_i \cup D'_i$. It is easy to see that for any~$j \in [r] \setminus \{j^*\}$, all vertices in~$B'_j \cup D'_j$ have the same closed neighborhood at that stage, consisting of the remainder of~$B'$ together with~$C'_j \cup C'_{i^*}$ and~$\bigcup _{i \in [r]} A'_i$. By \obsref{observation:finishBlock} we may assume that~$\pi'$ is block-contiguous after eliminating~$X'_{i^*}$, which means it eliminates the sets~$B'_j \cup D'_j$ one at a time. As we may shuffle the order within a set~$B'_j \cup D'_j$ without changing the cost (all closed neighborhoods of vertices from such a set are identical), we may assume that the remaining actions of~$\pi'$ on~$B'$ are alternatingly eliminating a set~$B'_j$ followed by the corresponding set~$D'_j$, until all of~$B'$ is eliminated. Then~$\pi'$ finishes by eliminating~$A'$ in some order. As this form exactly matches the definition of an $(i^*,j^*)$-canonical elimination order, we 
have proved that whenever an elimination order of~$G'$ exists that has cost at most~$k'$, then there is one that is canonical. This proves \lemmaref{lemma:exists:canonical:opt}.
\qed
\end{proof}

\subsection{Proof of Theorem \ref{theorem:treewidth:sparsification}} \label{section:sparsificationProof}

Having analyzed the relationship between elimination orders for~$G'$ and for the input graphs~$G_{i,j}$ ($i, j \in [r]$), we can complete the proof. By combining the previous lemmata it is easy to show that~$G'$ acts as the logical \OR of the inputs.

\begin{lemma} \label{lemma:constructionGivesOr}
$G'$ has an elimination order of cost~$\leq k'$ $\Leftrightarrow$ there are~$i,j \in [r]$ such that~$G_{i,j}$ has an elimination order of cost~$\leq n + k$.
\end{lemma}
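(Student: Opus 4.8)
The plan is to prove both directions of the equivalence by linking elimination orders of $G'$ with elimination orders of the appropriate input graph $G_{i^*,j^*}$, using the three structural lemmata already at our disposal.

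For the forward direction, suppose $G'$ has an elimination order of cost at most $k' = 3rn + \frac{n}{2} + k$. By \lemmaref{lemma:exists:canonical:opt} there are indices $i^*, j^* \in [r]$ such that $G'$ has an $(i^*,j^*)$-canonical elimination order $\pi'$ of cost at most $k'$. Now I need to extract from $\pi'$ a cheap elimination order for $G_{i^*,j^*}$. The natural candidate is the order $\pi$ on $G_{i^*,j^*}$ that first eliminates the vertices of $B_{i^*,j^*}$ in the same relative order as $\pi'$ eliminates the corresponding vertices of $B'_{j^*}$, and then eliminates $A_{i^*,j^*}$ in any order. This $\pi$ eliminates all of $B_{i^*,j^*}$ before any vertex of $A_{i^*,j^*}$ and agrees with $\pi'$ on $B_{i^*,j^*}$, so \lemmaref{lemma:canonicalOrderCorresponds} applies and gives $c_{G'}(\pi') = 3rn + \frac{n}{2} - n + c_{G_{i^*,j^*}}(\pi)$. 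Since $c_{G'}(\pi') \le k' = 3rn + \frac{n}{2} + k$, rearranging yields $c_{G_{i^*,j^*}}(\pi) \le n + k$, as desired.

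For the reverse direction, suppose some $G_{i,j}$ has an elimination order of cost at most $n+k$; set $i^* := i$, $j^* := j$. Since $G_{i^*,j^*}$ is cobipartite and $A_{i^*,j^*}$ is a clique, by \lemmaref{lemma:eliminateOnePartiteSet} I may assume this order, call it $\pi$, eliminates all of $B_{i^*,j^*}$ before any vertex of $A_{i^*,j^*}$, without increasing its cost. Now build an $(i^*,j^*)$-canonical elimination order $\pi'$ of $G'$ following \defref{definition:canonical}: first the blanker blocks $X'_i$ for $i \neq i^*$ in any order, then $B'_{j^*}$ in the relative order dictated by $\pi$ on $B_{i^*,j^*}$, then $D'_{j^*}$, then $X'_{i^*}$, then alternately $B'_i$ and $D'_i$ for the remaining indices, and finally $\bigcup_i A'_i \cup C'_i$ in any order. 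By construction this $\pi'$ is $(i^*,j^*)$-canonical and agrees with $\pi$ on $B_{i^*,j^*}$, so \lemmaref{lemma:canonicalOrderCorresponds} gives $c_{G'}(\pi') = 3rn + \frac{n}{2} - n + c_{G_{i^*,j^*}}(\pi) \le 3rn + \frac{n}{2} - n + (n+k) = k'$. Hence $G'$ has an elimination order of cost at most $k'$.

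I expect no real obstacle here: the statement is essentially a bookkeeping corollary of \lemmaref{lemma:canonicalOrderCorresponds} and \lemmaref{lemma:exists:canonical:opt}. The one point requiring a little care is making sure that in the reverse direction the order $\pi$ can be taken to eliminate $B_{i^*,j^*}$ entirely first — but this is exactly \lemmaref{lemma:eliminateOnePartiteSet} applied to the clique $A_{i^*,j^*}$ — and that the canonical order we build genuinely agrees with $\pi$ on $B_{i^*,j^*}$, which is immediate from how we defined the middle phase of $\pi'$. Combining this lemma with \thmref{theorem:tw:eliminationcharacterization} then translates the elimination-cost statement into the treewidth statement needed for the cross-composition.
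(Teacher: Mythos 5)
Your proposal is correct and follows essentially the same route as the paper: the forward direction via \lemmaref{lemma:exists:canonical:opt} and \lemmaref{lemma:canonicalOrderCorresponds}, and the reverse direction via \lemmaref{lemma:eliminateOnePartiteSet} followed by constructing a canonical order agreeing with~$\pi$ on~$B_{i^*,j^*}$ and again invoking \lemmaref{lemma:canonicalOrderCorresponds}. The arithmetic matches (the paper's printed ``$3n+\frac{n}{2}-n$'' in the reverse direction is a typo for~$3rn+\frac{n}{2}-n$, which you wrote correctly), so nothing further is needed.
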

\begin{proof}
($\Rightarrow$) Assume that~$G'$ has an elimination order~$\pi'$ of cost at most~$k'$. By \lemmaref{lemma:exists:canonical:opt} we may assume that~$\pi'$ is $(i^*,j^*)$-canonical, for appropriate choices of~$i^*$ and~$j^*$. Build an elimination order~$\pi$ for~$G_{i^*,j^*}$ that agrees with~$\pi'$ on~$B_{i^*,j^*}$. By \lemmaref{lemma:canonicalOrderCorresponds} this shows that~$c_{G'}(\pi') = 3rn + \frac{n}{2} - n + c_{G_{i^*,j^*}}(\pi)$. Hence~$c_{G_{i^*,j^*}}(\pi) = c_{G'}(\pi') - 3rn - \frac{n}{2} + n \leq k' - 3rn - \frac{n}{2} + n = n + k$. Thus~$G_{i^*,j^*}$ has an elimination order of cost at most~$n + k$.

($\Leftarrow$) In the other direction, suppose that~$G_{i^*,j^*}$ has an elimination order~$\pi$ of cost at most~$n + k$. As~$A_{i^*,j^*}$ is a clique in~$G_{i^*,j^*}$, by \lemmaref{lemma:eliminateOnePartiteSet} we may assume that~$\pi$ eliminates all vertices of~$B_{i^*,j^*}$ before any vertex of~$A_{i^*,j^*}$. Using \defref{definition:canonical} it is easy to see that a canonical elimination order~$\pi'$ for~$G'$ exists that agrees with~$\pi$ on~$B_{i^*,j^*}$. By \lemmaref{lemma:canonicalOrderCorresponds} the cost of~$\pi'$ on~$G'$ exceeds the cost of~$\pi$ on~$G_{i^*,j^*}$ by exactly~$3n+\frac{n}{2} - n$. So the cost of~$\pi'$ on~$G'$ is at most~$3n + \frac{n}{2} - n + (n + k) = k'$, which proves this direction of the claim.
\qed
\end{proof}

\begin{lemma} \label{lemma:elimCrossComposesIntoTw}
There is an \OR-cross-composition of \ElimCobipartite into \nTreewidth of cost~$\sqrt{t}$.
\end{lemma}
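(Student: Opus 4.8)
The plan is to assemble the machinery developed in \sectref{section:construction} and \sectref{section:analyzeComposedInstance} into an \OR-cross-composition meeting \defref{definition:crosscomposition}. First I would recall the setup: given $t$ instances of \ElimCobipartite that are equivalent under the polynomial equivalence relation $\eqvr$ defined at the start of \sectref{section:construction}, we may (by the equivalence) assume they share common partite-set size $n$ and common target $k$, and by duplicating inputs we may assume $t = r^2$ is a perfect square, re-indexing the instances as $(G_{i,j}, A_{i,j}, B_{i,j}, k_{i,j})$ for $i,j \in [r]$. Duplicating at most triples the number of inputs, so this costs only a constant factor and does not affect the polynomial running time. The composition algorithm is exactly the construction of $G'$ and the value $k' := 3rn + \frac{n}{2} + k$ from \sectref{section:construction}, with output instance $(n', G', k')$ of \nTreewidth where $n' := |V(G')|$; by \thmref{theorem:tw:eliminationcharacterization}, $\tw(G') \le k' - 1$ if and only if $G'$ has an elimination order of cost at most $k'$.

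Next I would verify the two requirements of \defref{definition:crosscomposition}. For the \OR-property: by \lemmaref{lemma:constructionGivesOr}, $G'$ has an elimination order of cost $\le k'$ if and only if there exist $i,j \in [r]$ such that $G_{i,j}$ has an elimination order of cost $\le n + k$, which by \thmref{theorem:tw:eliminationcharacterization} happens exactly when $\tw(G_{i,j}) \le n + k - 1 = |A_{i,j}| + k - 1$; and this is precisely the definition of $(G_{i,j}, A_{i,j}, B_{i,j}, k_{i,j})$ being a \yes-instance of \ElimCobipartite. (If all inputs are malformed, the algorithm outputs a fixed constant-size \no-instance, as noted in \sectref{section:construction}, so the \OR-property holds vacuously.) For the parameter bound: the parameter of the output is $n' = |V(G')|$. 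Counting the vertex groups in the construction, $|V(G')| = \sum_{i \in [r]}(|A'_i| + |B'_i| + |C'_i| + |D'_i| + |X'_i|) = r(n + n + n + n + \tfrac{n}{2}) = \tfrac{9}{2} r n$. Since $r = \sqrt{t}$ and $n = \max_i |x_i|^{\Oh(1)}$ (indeed $n$ is at most the size of any input instance), we get $n' \in \Oh(\sqrt{t} \cdot (\max_i |x_i|)^{c})$ for a constant $c$ independent of $t$, so the cost function is $f(t) = \sqrt{t}$ as claimed.

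Finally I would confirm the running time: the construction of $G'$ is explicitly observed to be polynomial in the total input size in \sectref{section:construction}, detecting malformed instances and checking $\eqvr$-equivalence is polynomial by \defref{polyEquivalenceRelation}, and the duplication step is polynomial. Hence all conditions of \defref{definition:crosscomposition} are satisfied with cost $\sqrt{t}$. I do not expect any real obstacle here: all the hard work — relating elimination orders of $G'$ to those of the input graphs via canonical orders — has already been carried out in \lemmaref{lemma:instanceDominatesCanonicalCost}, \lemmaref{lemma:canonicalOrderCorresponds}, \lemmaref{lemma:exists:canonical:opt}, and \lemmaref{lemma:constructionGivesOr}. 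The only points needing a little care are the bookkeeping for the vertex count $|V(G')| = \tfrac{9}{2}rn$ and making explicit that padding $t$ to a perfect square is harmless, both of which are routine.
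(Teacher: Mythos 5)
Your proposal is correct and follows the paper's own proof essentially verbatim: it invokes the construction of \sectref{section:construction}, applies \lemmaref{lemma:constructionGivesOr} for the \OR-property, translates elimination-order cost to treewidth via \thmref{theorem:tw:eliminationcharacterization}, and bounds the parameter by $|V(G')| = \tfrac{9}{2}rn \in \Oh(n\sqrt{t})$, exactly as the paper does. The only slip is a trivial off-by-one in the output instance: since the question of \nTreewidth asks whether $\tw(G')\leq k$, the instance emitted should be $(G', n', k'-1)$ rather than carrying $k'$ itself, which is precisely the offset you state in the same sentence.
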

\begin{proof}
In \sectref{section:construction} we gave a polynomial-time algorithm that, given instances $(G_{i,j}, A_{i,j}, B_{i,j}, k_{i,j})$ of \ElimCobipartite that are equivalent under~\eqvr for~$i,j \in [r]$, constructs a cobipartite graph~$G'$ with partite sets~$A'$ and~$B'$, and an integer~$k'$. By \lemmaref{lemma:constructionGivesOr} the resulting graph~$G'$ has an elimination order of cost~$k'$ if and only if there is a \yes-instance among the inputs. By the correspondence between treewidth and bounded-cost elimination orders of \thmref{theorem:tw:eliminationcharacterization}, this shows that~$G'$ has treewidth at most~$k' - 1$ if and only if there is a \yes-instance among the inputs. The polynomial equivalence relationship ensured that all partite sets of all inputs have the same number of vertices. For partite sets of size~$n$, the constructed graph~$G'$ satisfies~$|A'| = 2rn$ and~$|B'| = \frac{5rn}{2}$. The number of vertices in~$G'$ is~$n' = \frac{9rn}{2}$. Consider the \nTreewidth instance~$(G', n', k'-1)$.
 It expresses the logical \OR of a series of~$r^2 = t$ \ElimCobipartite instances using a parameter value of~$\frac{9n\sqrt{t}}{2} \in \Oh(n\sqrt{t})$. Hence the algorithm gives an \OR-cross-composition of \ElimCobipartite into \nTreewidth of cost~$\sqrt{t}$.
\qed
\end{proof}

\noindent \thmref{theorem:treewidth:sparsification} follows from the combination of \lemmaref{lemma:elimCrossComposesIntoTw}, \lemmaref{lemma:elimCobipartiteNPC}, and \thmref{theorem:orcrosscomposition:polynomialbound}. Since the pathwidth of a cobipartite graph equals its treewidth~\cite{Mohring96} and the graph formed by the cross-composition is cobipartite, the same construction gives an \OR-cross-composition of bounded cost into \nPathwidth.

\begin{corollary}
If \nPathwidth admits a (generalized) kernel of size $\Oh(n^{2 - \epsilon})$, for some~$\epsilon > 0$, then \containment.
\end{corollary}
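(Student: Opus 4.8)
The plan is to reuse the cross-composition of \sectref{section:construction} verbatim, changing nothing about the output. Given $t = r^2$ instances of \ElimCobipartite that are equivalent under~\eqvr, that construction produces a cobipartite graph~$G'$ together with an integer~$k'$, and \lemmaref{lemma:constructionGivesOr} combined with \thmref{theorem:tw:eliminationcharacterization} shows that $\tw(G') \leq k' - 1$ holds if and only if one of the inputs is a \yes-instance of \ElimCobipartite. The only new ingredient needed is the classical fact of Möhring~\cite{Mohring96} that the pathwidth of a cobipartite graph equals its treewidth; since cobipartiteness of~$G'$ was established during the construction and is preserved by every subsequent step, this gives $\pw(G') = \tw(G')$, so $\pw(G') \leq k' - 1$ holds under exactly the same condition as before.

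Next I would check that the remaining parameters of the cross-composition transfer unchanged. The polynomial equivalence relation~\eqvr is the same, and the vertex count $n' = \frac{9rn}{2} \in \Oh(n\sqrt{t})$ of~$G'$ is unaltered because neither the graph nor the target value is touched. Hence the instance $(G', n', k' - 1)$, now read as an instance of \nPathwidth, is an \OR-cross-composition of \ElimCobipartite into \nPathwidth of cost $f(t) = \sqrt{t}$, precisely mirroring \lemmaref{lemma:elimCrossComposesIntoTw}.

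Finally I would invoke \thmref{theorem:orcrosscomposition:polynomialbound} with $d = 2$: \ElimCobipartite is NP-hard under Karp reductions by \lemmaref{lemma:elimCobipartiteNPC}, it \OR-cross-composes into \nPathwidth with cost $t^{1/2 + o(1)}$, and therefore a (generalized) kernel for \nPathwidth of size $\Oh(k^{2 - \epsilon}) = \Oh(n^{2 - \epsilon})$ for some~$\epsilon > 0$ would imply \containment. I do not anticipate any genuine obstacle: the whole argument is a one-line transfer through Möhring's equality, and the single point deserving care is simply confirming that cobipartiteness of~$G'$ — already guaranteed by the construction — is all that Möhring's theorem requires, so that no part of \sectref{section:analyzeComposedInstance} or \sectref{section:sparsificationProof} needs to be reworked.
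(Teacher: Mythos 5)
Your proposal is correct and matches the paper's own argument exactly: the paper likewise reuses the construction unchanged, invokes M\"ohring's equality of pathwidth and treewidth on cobipartite graphs, and concludes via \thmref{theorem:orcrosscomposition:polynomialbound} with $d=2$. No gaps.
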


\section{Quadratic-Vertex Kernel for Treewidth [VC]} \label{section:tw:kernel}
In this section we present an improved kernel for \TreewidthVC, which is formally defined as follows.

\parproblemdef{\TreewidthVC}
{A graph~$G$, a vertex cover~$X \subseteq V(G)$, and an integer~$k$.}
{$|X|$.}
{Is the treewidth of~$G$ at most~$k$?}

\noindent Our kernelization revolves around the following notion.

\begin{definition}
Let~$G$ be a graph, let~$T$ be an independent set in~$G$, and let~$\hat{G}_T$ be the graph obtained from~$G$ by eliminating~$T$; the order is irrelevant as~$T$ is independent. Then~$T$ is a \emph{treewidth-invariant set} if for every~$v \in T$, the graph~$\hat{G}_T$ is a minor of~$G - \{v\}$.
\end{definition}

\begin{lemma} \label{lemma:graph:reduction:step}
If~$T$ is a treewidth-invariant set in~$G$ and~$\Delta := \max _{v \in T} \deg_G(v)$, then~$\tw(G) = \max (\Delta, \tw(\hat{G}_T))$.
\end{lemma}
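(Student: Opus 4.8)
The plan is to establish the two inequalities $\tw(G) \le \max(\Delta, \tw(\hat{G}_T))$ and $\tw(G) \ge \max(\Delta, \tw(\hat{G}_T))$ separately, using the elimination-order characterization of treewidth (\thmref{theorem:tw:eliminationcharacterization}) for the first and minor-monotonicity of treewidth for part of the second. We may assume $T \neq \emptyset$, since otherwise $\hat{G}_T = G$ and the statement is immediate. For the upper bound, I would take an elimination order of $\hat{G}_T$ of minimum cost $\tw(\hat{G}_T) + 1$ and prepend to it an arbitrary ordering of the vertices of $T$. Because $T$ is independent, during this initial phase no vertex of $T$ is ever a neighbour of another vertex of $T$: there are no edges inside $T$ to begin with, and eliminating some $t'' \in T$ only completes $N_G(t'')$ into a clique, and $N_G(t'')$ is disjoint from $T$. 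Hence, when a vertex $v \in T$ is eliminated, its closed neighbourhood is exactly $N_G[v]$, of size $\deg_G(v)+1 \le \Delta+1$; and once all of $T$ has been eliminated the current graph is precisely $\hat{G}_T$. The composed order therefore has cost at most $\max(\Delta+1, \tw(\hat{G}_T)+1)$, which by \thmref{theorem:tw:eliminationcharacterization} yields $\tw(G) \le \max(\Delta, \tw(\hat{G}_T))$.

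For the lower bound, the inequality $\tw(\hat{G}_T) \le \tw(G)$ is immediate: for any $v \in T$, the hypothesis that $T$ is treewidth-invariant gives that $\hat{G}_T$ is a minor of $G - \{v\}$ and hence of $G$, and treewidth is minor-monotone. The substantive point is $\tw(G) \ge \Delta$. Fix $v \in T$ with $\deg_G(v) = \Delta$ and write $N_G(v) = \{w_1, \dots, w_\Delta\}$. When $T$ is eliminated, the step that removes $v$ completes $N_G(v)$ into a clique, and since $N_G(v) \cap T = \emptyset$ no later step deletes any of these vertices; so $N_G(v)$ induces a clique of size $\Delta$ in $\hat{G}_T$. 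Now take a minor model $\{B_u\}_{u \in V(\hat{G}_T)}$ of $\hat{G}_T$ in $G - \{v\}$; the branch sets $B_{w_1}, \dots, B_{w_\Delta}$ are then pairwise disjoint, connected, and pairwise adjacent. I would adjoin the singleton branch set $\{v\}$: as $v$ is adjacent in $G$ to every $w_i$, provided the model can be chosen so that $w_i \in B_{w_i}$ for each $i$ (equivalently, so that each $B_{w_i}$ meets $N_G(v)$), the sets $\{v\}, B_{w_1}, \dots, B_{w_\Delta}$ form a minor model of $K_{\Delta+1}$ in $G$, whence $\tw(G) \ge \tw(K_{\Delta+1}) = \Delta$. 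Combined with the upper bound, this gives $\tw(G) = \max(\Delta, \tw(\hat{G}_T))$.

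I expect the main obstacle to be this last step: arranging that the minor model of $\hat{G}_T$ in $G - \{v\}$ is rooted at $N_G(v)$ (i.e.\ that $w_i \in B_{w_i}$), so that reinserting $v$ produces a genuine $K_{\Delta+1}$-minor and not merely a $K_\Delta$-minor with a pendant vertex. I would approach this by using that $\hat{G}_T$ arises from $G$ solely by eliminating $T$: it has vertex set $V(G)\setminus T$, it contains $G[V(G)\setminus T]$ as a spanning subgraph, and its extra edges are all forced by eliminations of vertices of $T$, while $N_G(v)$ itself is left untouched by the elimination of $T\setminus\{v\}$. Starting from an arbitrary minor model and absorbing into neighbouring branch sets the vertices of $T\setminus\{v\}$ that the model depends on, one should be able to keep each $w_i$ in its own branch set. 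This rooted-minor bookkeeping is where the genuine care lies; the two displayed inequalities are routine once \thmref{theorem:tw:eliminationcharacterization} is in hand.
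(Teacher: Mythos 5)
Your overall architecture coincides with the paper's: the upper bound via an elimination order that first eliminates $T$ (each vertex at cost $\deg_G(v)+1\le\Delta+1$, using independence of $T$) and then follows an optimal order for $\hat{G}_T$; the lower bound via minor-monotonicity plus a $K_{\Delta+1}$-minor assembled from $v$ and the image of $N_G(v)$. The upper bound and the inequality $\tw(\hat{G}_T)\le\tw(G)$ are complete and correct as you wrote them.

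The gap is exactly the step you flagged, and your sketched repair does not close it. From the bare assertion that $\hat{G}_T$ is isomorphic to a minor of $G-\{v\}$ you get no control over where the vertices $w_1,\dots,w_\Delta$ of $N_G(v)$ sit relative to the branch sets $B_{w_1},\dots,B_{w_\Delta}$; ``absorbing the vertices of $T\setminus\{v\}$ that the model depends on'' is not an argument, since an arbitrary model may place $B_{w_i}$ entirely outside $N_G(v)$, in which case adjoining $\{v\}$ yields only $K_\Delta$ with a pendant vertex. The paper avoids this by using the minor relation in its concrete, named-vertex form (as set up in the preliminaries: a minor is \emph{obtained} from a subgraph by contractions, each contraction merging a vertex into a named surviving vertex). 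Under that reading, the operations witnessing that $\hat{G}_T$ is a minor of $G-\{v\}$ keep every vertex of $N_G(v)$ alive as itself — note $N_G(v)\subseteq V(\hat{G}_T)$ because $T$ is independent — and turn $N_G(v)$ into a clique, since it is one in $\hat{G}_T$. Replaying the identical deletions and contractions inside $G$ never touches $v$ or its incident edges, so the result is a clique on $N_G[v]$ of size $\Delta+1$, giving $\tw(G)\ge\Delta$ directly, with no rooted-minor bookkeeping. This named form is also exactly what the application provides: in \lemmaref{lemma:expansion:is:invariant} the minor is constructed by contracting vertices of $T\setminus\{v^*\}$ into their neighbours in $X$ and deleting the rest, so each $w_i$ survives in its own branch set. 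As written, your proof establishes only $\tw(G)\ge\max(\Delta-1,\tw(\hat{G}_T))$; to finish it you must either adopt the named-vertex convention explicitly or prove the rooting claim, which you have not done.
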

\begin{proof}
We prove that~$\tw(G)$ is at least, and at most, the claimed amount.

\textbf{($\geq$).} As~$\hat{G}_T$ is a minor of~$G$, we have~$\tw(G) \geq \tw(\hat{G}_T)$ (cf.~\cite{Bodlaender98}). If~$\tw(\hat{G}_T) \geq \Delta$ then this implies the inequality. So assume that~$\Delta > \tw(\hat{G}_T)$. Let~$v \in T$ have degree~$\Delta$. By assumption,~$\hat{G}_T$ is a minor of~$G - \{v\}$. It contains all vertices of~$N_G(v)$ since~$T$ is an independent set. As~$N_G(v)$ is a clique in~$\hat{G}_T$, there is a series of minor operations in~$G - \{v\}$ that turns~$N_G(v)$ into a clique. Performing these operations on~$G$ rather than~$G - \{v\}$ results in a clique on vertex set~$N_G[v]$ of size~$\deg_G(v) + 1 = \Delta + 1$: the set~$N_G(v)$ is turned into a clique, and~$v$ remains unchanged. Hence~$G$ has a clique with~$\Delta + 1$ vertices as a minor, which is known to imply (cf.~\cite{Bodlaender98}) that its treewidth is at least~$\Delta$.

\textbf{($\leq$).} Consider an optimal elimination order~$\hat{\pi}$ for~$\hat{G}_T$, which costs~$\tw(\hat{G}_T) + 1$ by \thmref{theorem:tw:eliminationcharacterization}. Form an elimination order~$\pi$ for~$G$ by first eliminating all vertices in~$T$ in arbitrary order, followed by the remaining vertices in the order dictated by~$\hat{\pi}$. Consider what happens when eliminating the graph~$G$ in the order given by~$\pi$. Each vertex~$v \in T$ that is eliminated incurs cost~$\deg_G(v) + 1 \leq \Delta + 1$: as~$T$ is an independent set, eliminations before~$v$ do not affect~$v$'s neighborhood. Once all vertices of~$T$ have been eliminated, the resulting graph is identical to~$\hat{G}_T$, by definition. As~$\pi$ matches~$\hat{\pi}$ on the vertices of~$V(G) \setminus T$, and~$\hat{\pi}$ has cost~$\tw(\hat{G}_T) + 1$, the total cost of elimination order~$\pi$ on~$G$ is~$\max (\Delta + 1, \tw(\hat{G}_T) + 1)$. By \thmref{theorem:tw:eliminationcharacterization} this completes this direction of the proof.
\qed
\end{proof}

\lemmaref{lemma:graph:reduction:step} shows that when a treewidth-invariant set is eliminated from a graph, its treewidth changes in a controlled manner. To exploit this insight in a kernelization algorithm, we have to find treewidth-invariant sets in polynomial time. While it seems difficult to detect such sets in all circumstances, we show that the $q$-expansion lemma can be used to find a treewidth-invariant set when the size of the graph is large compared to its vertex cover number. The following auxiliary graph is needed for this procedure.

\begin{definition}
Given a graph~$G$ with a vertex cover~$X \subseteq V(G)$, we define the bipartite \emph{non-edge connection graph~$H_{G,X}$}. Its partite sets are~$V(G) \setminus X$ and~$\binom{X}{2} \setminus E(G)$, with an edge between a vertex~$v \in V(G) \setminus X$ and a vertex~$x_{\{p,q\}}$ representing~$\{p,q\} \in \binom{X}{2} \setminus E(G)$ if~$v \in N_G(p) \cap N_G(q)$. 
\end{definition}

For disjoint vertex subsets~$S$ and~$T$ in a graph~$G$, we say that \emph{$S$ is saturated by $q$-stars into~$T$} if we can assign to every~$v \in S$ a subset~$f(v) \subseteq N_G(v) \cap T$ of size~$q$, such that for any pair of distinct vertices~$u,v \in S$ we have~$f(u) \cap f(v) = \emptyset$. Observe that an empty set can trivially be saturated by $q$-stars.

\begin{lemma} \label{lemma:expansion:is:invariant}
Let~$(G,X,k)$ be an instance of \TreewidthVC. If~$H_{G,X}$ contains a set~$T \subseteq V(G) \setminus X$ such that~$S := N_{H_{G,X}}(T)$ can be saturated by $2$-stars into~$T$, then~$T$ is a treewidth-invariant set.
\end{lemma}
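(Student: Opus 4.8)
The plan is to verify the two requirements in the definition of a treewidth-invariant set directly, without invoking the $q$-expansion lemma (that will only be needed later, to find such a~$T$ efficiently). First I would record two easy structural facts: since~$T \subseteq V(G) \setminus X$ and~$X$ is a vertex cover, $T$ is an independent set (so~$\hat{G}_T$ is well-defined) and every~$v \in T$ satisfies~$N_G(v) \subseteq X$. The crucial preliminary observation is an explicit description of~$\hat{G}_T$: because~$T$ is independent, eliminating its vertices in any order only inserts edges \emph{inside}~$X$, and a non-edge~$\{p,q\} \in \binom{X}{2} \setminus E(G)$ becomes an edge of~$\hat{G}_T$ exactly when some~$v \in T$ is adjacent to both~$p$ and~$q$, i.e. exactly when~$x_{\{p,q\}} \in N_{H_{G,X}}(T) = S$. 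Hence~$V(\hat{G}_T) = V(G) \setminus T$ and~$E(\hat{G}_T) = E(G - T) \,\dotcup\, \{\{p,q\} : x_{\{p,q\}} \in S\}$.

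Next, fix~$v \in T$; I must exhibit~$\hat{G}_T$ as a minor of~$G - \{v\}$. Let~$f$ be the assignment witnessing that~$S$ is saturated by $2$-stars into~$T$, so for every~$x_{\{p,q\}} \in S$ we have~$f(x_{\{p,q\}}) \subseteq N_{H_{G,X}}(x_{\{p,q\}}) \cap T$ with~$|f(x_{\{p,q\}})| = 2$, and the sets~$f(\cdot)$ are pairwise disjoint. Because~$|f(x_{\{p,q\}})| = 2$, I can pick a vertex~$w_{\{p,q\}} \in f(x_{\{p,q\}}) \setminus \{v\}$; it lies in~$T \setminus \{v\}$ and is adjacent in~$G$ to both~$p$ and~$q$, and by disjointness of the~$f$-sets these chosen vertices are pairwise distinct over all~$x_{\{p,q\}} \in S$. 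The minor model is then: starting from~$G - \{v\}$, for each~$x_{\{p,q\}} \in S$ contract the edge~$\{w_{\{p,q\}}, p\}$ into~$p$, and finally delete every remaining vertex of~$T \setminus \{v\}$ that was not selected as some~$w_{\{p,q\}}$.

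I would then verify that this model yields~$\hat{G}_T$. Contracting~$w_{\{p,q\}}$ into~$p$ makes~$p$ adjacent to~$N_G(w_{\{p,q\}}) \setminus \{p\} \ni q$, so the required edge~$\{p,q\}$ is created; and every edge so created is of the form~$\{p,p'\}$ with~$p,p' \in N_G(w_{\{p,q\}})$, hence already lies in~$E(\hat{G}_T)$ since eliminating~$w_{\{p,q\}} \in T$ turns~$N_G(w_{\{p,q\}})$ into a clique. Vertices of~$X$ are never contracted away and the selected vertices~$w_{\{p,q\}}$ are distinct and pairwise non-adjacent (independence of~$T$), so each contraction stays well-defined and is unaffected by the others; contractions never delete edges, so all of~$E(G-T)$ survives, and the final deletions remove exactly~$T \setminus \{v\}$. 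Thus the resulting graph has vertex set~$V(G) \setminus T$, its edge set contains~$E(G-T) \cup \{\{p,q\} : x_{\{p,q\}} \in S\} = E(\hat{G}_T)$, and (by the previous sentence) is contained in~$E(\hat{G}_T)$; so it equals~$\hat{G}_T$. Since this works for every~$v \in T$, $T$ is treewidth-invariant.

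The main obstacle I anticipate is the bookkeeping in the last step: checking that the sequence of contractions is always legitimate (no selected vertex is eliminated or separated from its target~$p$ by an earlier contraction) and that the edges introduced by contracting~$w_{\{p,q\}}$ into~$p$ — which may be strictly more than~$\{p,q\}$ — are all genuinely present in~$\hat{G}_T$. Disjointness of the $2$-stars is exactly what keeps the contractions independent of one another, and the size-$2$ requirement is exactly what guarantees that for \emph{every} non-edge in~$S$ we can avoid the deleted vertex~$v$; making these two uses of the hypothesis precise is the heart of the argument.
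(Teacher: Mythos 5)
Your proposal is correct and follows essentially the same route as the paper: for each $v\in T$ and each non-edge $\{p,q\}$ represented in $S$, use the $2$-star assignment to pick a saturating vertex of $T$ other than $v$ and contract it into $p$, with disjointness of the stars guaranteeing the contractions do not interfere. The only difference is that you additionally verify the resulting graph equals $\hat{G}_T$ exactly (rather than merely containing it as a spanning subgraph), which is a harmless strengthening.
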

\begin{proof}
As~$T$ is a subset of the independent set~$V(G) \setminus X$, the set~$T$ is independent in~$G$. It remains to prove that for every~$v \in T$, the graph~$\hat{G}_T$ is a minor of~$G - \{v\}$. So consider an arbitrary vertex~$v^* \in T$. We give a series of minor operations that transforms~$G - \{v^*\}$ into~$\hat{G}_T$. The crucial part of the transformation consists of contracting vertices of~$T \setminus \{v^*\}$ into vertices of~$X$, to turn~$N_G(v)$ into a clique for all~$v \in T$; afterwards we can simply delete all remaining vertices of~$T \setminus \{v^*\}$. Let~$f \colon S \to \binom{T}{2}$ be a mapping that assigns to each vertex in~$v$ a set of two of~$v$'s neighbors in~$T$, such that the images of~$f$ are pairwise disjoint.

Consider a vertex~$v \in T$ such that~$N_G(v)$ is not a clique. Let~$\{p,q\}$ be a non-edge in~$G[N_G(v)]$. As~$v$ is adjacent to both~$p$ and~$q$, vertex~$v$ is adjacent to the representative~$x_{\{p,q\}}$ in~$H_{G,X}$, implying that~$x_{\{p,q\}} \in S$. Hence~$x_{\{p,q\}}$ is saturated by a $2$-star into~$T$. Consider the two vertices~$f(x_{\{p,q\}})$ assigned to~$x_{\{p,q\}}$; at least one of them, say~$u$, differs from~$v^*$. As~$u$ is adjacent to~$x_{\{p,q\}}$ in~$H_{G,X}$ by definition of $2$-star saturation, by definition of~$H_{G,X}$ this implies that~$u$ is adjacent to both~$p$ and~$q$. Hence contracting~$u$ into~$p$ creates the missing edge~$\{p,q\}$. Now observe that as the images of~$f$ are pairwise disjoint, for each non-edge~$\{p,q\}$ in the neighborhood of some vertex in~$T$, there is a distinct vertex unequal to~$v^*$ that can be contracted to create the non-edge. Contracting all such vertices into appropriate neighbors therefore turns each set~$N_G(v)$ for~$v \in T$ into a clique. Hence we 
establish that~$\hat{G}_T$ is indeed a minor of~$G - \{v^*\}$, proving that~$T$ is treewidth-invariant.
\qed
\end{proof}

\begin{qexpansion}[{\cite[Lemma 12]{FominLMPS11}}]
Let~$q$ be a positive integer, and let~$m$ be the size of a maximum matching in a bipartite graph~$H$ with partite sets~$A$ and~$B$. If~$|B| > m \cdot q$ and there are no isolated vertices in~$B$, then there exist nonempty vertex sets~$S \subseteq A$ and~$T \subseteq B$ such that~$S$ is saturated by $q$-stars into~$T$ and~$S = N_H(T)$. Furthermore,~$S$ and~$T$ can be found in time polynomial in the size of~$H$ by a reduction to bipartite matching.
\end{qexpansion}

\begin{theorem}
\TreewidthVC has a kernel with~$\Oh(|X|^2)$ vertices that can be encoded in~$\Oh(|X|^3)$ bits.
\end{theorem}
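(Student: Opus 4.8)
The plan is to give a polynomial-time algorithm that repeatedly removes vertices from the independent part $V(G)\setminus X$ by applying two answer-preserving reduction rules, until $|V(G)\setminus X|\le |X|(|X|-1)$, and then argue about the bit-size of the result. The two rules are justified by \lemmaref{lemma:graph:reduction:step}, \lemmaref{lemma:expansion:is:invariant}, and the $q$-Expansion Lemma. Throughout, we maintain the invariant that $X$ is a vertex cover of the current graph $G$ and that $V(G)\setminus X$ is independent: each rule only deletes a vertex of $V(G)\setminus X$ or eliminates such a vertex, and since every vertex of $V(G)\setminus X$ has all its neighbours in $X$, eliminating it adds edges only inside $X$.

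\emph{Rule 1 (simplicial cleanup).} While some $v\in V(G)\setminus X$ has $N_G(v)$ a clique, note that $\{v\}$ is a treewidth-invariant set, since $\hat G_{\{v\}}=G-v$ (no edges need to be added) and $G-v$ is trivially a minor of $G-\{v\}$. By \lemmaref{lemma:graph:reduction:step}, $\tw(G)=\max(\deg_G(v),\tw(G-v))$. As $\deg_G(v)\le |X|$, if $\deg_G(v)>k$ then $k<|X|$ and we halt, outputting the no-instance $(K_{k+2},X',k)$ where $X'$ is a size-$(k+1)\le |X|$ vertex cover of $K_{k+2}$; otherwise we replace $G$ by $G-v$. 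Once Rule 1 no longer applies, no vertex of $V(G)\setminus X$ is isolated in the non-edge connection graph $H_{G,X}$, because isolation there is exactly the condition that the vertex's neighbourhood in $X$ is a clique.

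\emph{Rule 2 ($q$-expansion).} Consider $H_{G,X}$ with partite sets $A:=\binom{X}{2}\setminus E(G)$ and $B:=V(G)\setminus X$, with $q=2$. Every matching in $H_{G,X}$ has size $m\le |A|\le \binom{|X|}{2}$, so $m\cdot q=2m\le |X|(|X|-1)$. Hence if $|V(G)\setminus X|>|X|(|X|-1)$, the $q$-Expansion Lemma applies (its other hypothesis, no isolated vertices in $B$, holds after Rule 1) and yields, in polynomial time, nonempty $S\subseteq A$ and $T\subseteq B$ with $S=N_{H_{G,X}}(T)$ and $S$ saturated by $2$-stars into $T$. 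By \lemmaref{lemma:expansion:is:invariant}, $T$ is a treewidth-invariant set, so by \lemmaref{lemma:graph:reduction:step}, $\tw(G)=\max(\Delta,\tw(\hat G_T))$ with $\Delta:=\max_{v\in T}\deg_G(v)\le |X|$; again, if $\Delta>k$ we halt with a small no-instance as in Rule 1, and otherwise we replace $G$ by $\hat G_T$, which has $|T|\ge 1$ fewer vertices. We apply Rules 1 and 2 alternately until neither applies. Each application removes at least one vertex, so the algorithm terminates in polynomial time, and at termination $|V(G)\setminus X|\le |X|(|X|-1)$, whence $|V(G)|\le |X|+|X|(|X|-1)=|X|^2$. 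Finally, if $k\ge |V(G)|-1$ we output a trivial yes-instance (ensuring $k<|X|^2$ in the output); otherwise we output $(G,X,k)$. Correctness follows directly from the case analyses in the two rules, and by \thmref{theorem:tw:eliminationcharacterization} one could phrase everything in terms of treewidth rather than elimination orders.

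It remains to bound the encoding length. The output graph $G'$ has at most $|X|^2$ vertices, $X$ as a vertex cover, and $V(G')\setminus X$ independent; hence $G'$ is fully described by the $\binom{|X|}{2}$ bits recording $E(G')\cap\binom{X}{2}$, together with one $|X|$-bit neighbourhood vector for each of the at most $|X|(|X|-1)$ vertices of $V(G')\setminus X$ --- in total $O(|X|^3)$ bits --- plus $O(\log|X|)$ bits to describe $X$ and to record $k'=k<|V(G')|$. I do not expect a deep obstacle: the technical weight is carried by \lemmaref{lemma:graph:reduction:step}, \lemmaref{lemma:expansion:is:invariant}, and the $q$-Expansion Lemma, and the main points to get right are (i) that Rule 1 is precisely what removes the isolated vertices forbidden by the $q$-Expansion Lemma, (ii) that the bound $\Delta\le |X|$ keeps every no-instance we emit within an $O(|X|)$-vertex budget, and (iii) that keeping $V(G)\setminus X$ independent (so $X$ remains a vertex cover) is exactly what makes the final $O(|X|^3)$-bit encoding available.
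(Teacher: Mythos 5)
Your proposal is correct and follows essentially the same route as the paper: detect isolated vertices of $V(G)\setminus X$ in $H_{G,X}$ as singleton treewidth-invariant sets, otherwise apply the $q$-expansion lemma with $q=2$ to find a treewidth-invariant set, reduce via \lemmaref{lemma:graph:reduction:step} until $|V(G)\setminus X|\le 2\binom{|X|}{2}$, and encode the result with an adjacency matrix on $X$ plus neighbourhood vectors. Your explicit remarks that $X$ stays a vertex cover under elimination and that the emitted no-instances stay small are correct and only make explicit what the paper leaves implicit.
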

\begin{proof}
Given an instance~$(G,X,k)$ of \TreewidthVC, the algorithm constructs the non-edge connection graph~$H_{G,X}$ with partite sets~$A = \binom{X}{2} \setminus E(G)$ and~$B = V(G) \setminus X$. We attempt to find a treewidth-invariant set~$T \subseteq B$. If~$B$ has an isolated vertex~$v$, then by definition of~$H_{G,X}$ the set~$N_G(v)$ is a clique implying that~$\{v\}$ is treewidth-invariant. If~$B$ has no isolated vertices, we apply the $q$-expansion lemma with~$q := 2$ to attempt to find a set~$S \subseteq A$ and~$T \subseteq B$ such that~$S$ is saturated by $2$-stars into~$T$ and~$S = N_{H_{G,X}}(T)$. Hence such a set~$T$ is  treewidth-invariant by \lemmaref{lemma:expansion:is:invariant}. If we find a treewidth-invariant set~$T$:
\begin{itemize}
	\item If~$\max _{v \in T} \deg_G(v) \geq k+1$ then we output a constant-size \no-instance, as \lemmaref{lemma:graph:reduction:step} then ensures that~$\tw(G) \geq \deg_G(v) > k$. 
	\item Otherwise we reduce to~$(\hat{G}_T, X, k)$ and restart the algorithm.
\end{itemize}

\noindent Each iteration takes polynomial time. As the vertex count decreases in each iteration, there are at most~$n$ iterations until we fail to find a treewidth-invariant set. When that happens, we output the resulting instance. The $q$-expansion lemma ensures that at that point,~$|B| \leq 2m$, where~$m$ is the size of a maximum matching in~$H_{G,X}$. As~$m$ cannot exceed the size of the partite set~$A$, which is bounded by~$\binom{|X|}{2}$ as there cannot be more non-edges in a set of size~$|X|$, we find that~$|B| \leq 2 \binom{|X|}{2}$ upon termination. As vertex set~$B$ of the graph~$H_{G,X}$ directly corresponds to~$V(G) \setminus X$, this implies that~$G$ has at most~$|X| + 2 \binom{|X|}{2}$ vertices after exhaustive reduction. Thus the instance that we output has~$\Oh(|X|^2)$ vertices. We can encode it in~$\Oh(|X|^3)$ bits: we store an adjacency matrix for~$G[X]$, and for each of the~$\Oh(|X|^2)$ vertices~$v$ in~$V(G) \setminus X$ we store a vector 
of~$|X|$ bits, indicating for each~$x \in X$ whether~$v$ is adjacent to it.
\qed
\end{proof}

\section{Conclusion}
In this paper we contributed to the knowledge of sparsification for \Treewidth by establishing lower and upper bounds. Our work raises a number of questions.

We showed that \Treewidth and \Pathwidth instances on~$n$ vertices are unlikely to be compressible into~$\Oh(n^{2 - \epsilon})$ bits. Are there natural problems on general graphs that do allow (generalized) kernels of size~$\Oh(n^{2 - \epsilon})$? Many problems admit $\Oh(k)$-vertex kernels when restricted to \emph{planar} graphs~\cite{BodlaenderFLPST09}, which can be encoded in~$\Oh(k)$ bits by employing succinct representations of planar graphs. Obtaining subquadratic-size compressions for NP-hard problems on classes of potentially \emph{dense} graphs, such as unit-disk graphs, is an interesting challenge.

In \sectref{section:tw:kernel} we gave a quadratic-vertex kernel for \TreewidthVC. While the algorithm is presented for the decision problem, it is easily adapted to the optimization setting (cf.~\cite{BodlaenderKE05}). The key insight for our reduction is the notion of treewidth-invariant sets, together with the use of the $q$-expansion lemma to find them when the complement of the vertex cover has superquadratic size. A challenge for future research is to identify treewidth-invariant sets that are not found by the $q$-expansion lemma; this might decrease the kernel size even further. As the sparsification lower bound proves that \TreewidthVC is unlikely to admit kernels of bitsize~$\Oh(|X|^{2 - \epsilon})$, while the current kernel can be encoded in~$\Oh(|X|^3)$ bits, an obvious open problem is to close the gap between the upper and the lower bound. Does \TreewidthVC have a kernel with~$\Oh(|X|)$ vertices? If not, then is there at least a kernel with~$\Oh(|X|^2)$ rather than~$\Oh(|X|^3$) edges?

For \PathwidthVC, a kernel with~$\Oh(|X|^3)$ vertices is known~\cite{BodlaenderJK12a}. Can this be improved to~$\Oh(|X|^2)$ using an approach similar to the one used here? The obvious pathwidth-analogue of \lemmaref{lemma:graph:reduction:step} fails, as removing a low-degree simplicial vertex may decrease the pathwidth of a graph. Finally, one may consider whether the ideas of the present paper can improve the kernel size for \Treewidth parameterized by a feedback vertex set~\cite{BodlaenderJK11b}.

\bibliography{Paper}
\bibliographystyle{abbrvurl}

\end{document}